\newcommand{\trefeq}[1]{(\textrm{#1})}
\newcommand{\grd}[1]{[~#1~]}
\newcommand{\dataref}[5]{\ensuremath\mathbin{#3 \st \sdststile{#2\ \st\
      #5}{#1\ \st\  #4}}} 
\newcommand{\matches}[3]{\ensuremath\mathbin{#2 \stackrel{#1}{=\!\!=} #3}}
\newcommand{\bmatches}[3]{(\matches{#1}{#2}{#3})}
\newcommand{\TL}[1]{\tag{#1}\label{#1}}
\newcommand{\NoteEnv}[3]{\newenvironment{#1}{\color{#3}#2 }{}}
\definecolor{brijeshcolor}{rgb}{0.8,0,0.2}
\definecolor{iancolor}{rgb}{1,0,0}
\definecolor{lindsaycolor}{cmyk}{0.2,1,0,0}
\definecolor{johncolor}{cmyk}{1,0.3,0.4,0.3}
\theoremstyle{plain}
\newcounter{thm}
\newtheorem{theorem}[thm]{Theorem}
\newtheorem{lemma}[thm]{Lemma}
\theoremstyle{definition}
\newtheorem{definition}[thm]{Definition}
\newtheorem{example}[thm]{Example}
\newcommand{\NE}[1]{\underline{#1}}
\newcommand{\Fin}{{\sf finite}}
\newcommand{\Inf}{{\sf infinite}}
\newcommand{\st}{~{\scriptscriptstyle ^\bullet}~}
\def\abssynt{\mathop{:\joinrel:\joinrel=}}
\newcommand{\Par}{\textstyle\mathop{\|}}
\newcommand{\Empty}{{\sf empty}}
\newcommand{\Always}{\textstyle\mathord{\boxdot}}
\newcommand{\Eventually}{\rotatebox[origin=c]{45}{$\textstyle\boxdot$}}
\newcommand{\Def}{\textstyle\mathord{\boxast}}
\newcommand{\Pos}{\rotatebox[origin=c]{45}{$\textstyle\boxast$}}
\def\adjoins{\mathbin{\varpropto}}
\newcommand{\adj}[1]{\adjoins_{_{\!#1}}}
\newcommand{\adjd}[1]{\adj{\Delta}}
\def\ch{\mathbin{;}}
\def \kif{\mathop{\mathsf{if}}}
\def \kthen{\mathbin{\mathsf{then}}}
\def \kelse{\mathbin{\mathsf{else}}}
\def \bs {\backslash}
\def \dom {\mathrm{dom}}
\DeclareMathSymbol{\Box}{\mathord}{lasy}{"32}
\DeclareMathSymbol{\Diamond}{\mathord}{lasy}{"33}
\def\prev{\mathop\varominus}
\def\equiv{=}
\def\entails{\imp}
\begin{document}

\title{Data refinement for true concurrency}

\author{Brijesh Dongol \qquad\qquad John Derrick 
\institute{Department of Computer Science, \\
  The University of Sheffield S1 4DP, UK}
\email{B.Dongol@sheffield.ac.uk, J.Derrick@dcs.shef.ac.uk}}

\def\authorrunning{B. Dongol and J.Derrick}
\def\titlerunning{Data refinement for true concurrency}
\maketitle
\begin{abstract}
  The majority of modern systems exhibit sophisticated concurrent
  behaviour, where several system components modify and observe the
  system state with fine-grained atomicity. Many systems (e.g.,
  multi-core processors, real-time controllers) also exhibit truly
  concurrent behaviour, where multiple events can occur
  simultaneously.  This paper presents data refinement defined in
  terms of an interval-based framework, which includes high-level
  operators that capture non-deterministic expression evaluation. By
  modifying the type of an interval, our theory may be specialised to
  cover data refinement of both discrete and continuous systems. We
  present an interval-based encoding of forward simulation, then prove
  that our forward simulation rule is sound with respect to our data
  refinement definition. A number of rules for decomposing forward
  simulation proofs over both sequential and parallel composition are
  developed.
\end{abstract}

\section{Introduction}

Data refinement allows one to develop systems in a stepwise manner,
enabling an abstract system to be replaced with a more concrete
implementation by guaranteeing that every observable behaviour of the
concrete system is a possible observable behaviour of the abstract.  A
benefit of such developments is the ability to reason at a level of
abstraction suitable for the current stage of development, and the
ability to introduce additional detail to a system via
correctness-preserving transformations. During development, a concrete
system's internal representation of data often differs from the
abstract data representation, requiring the use of a \emph{refinement
  relation} to link the concrete and abstract
states.  

Over the years, numerous techniques for verifying data refinement
techniques have been developed for a number of application domains
\cite{deRoever98}, including methods for refinement of concurrent
\cite{DB03} and real-time \cite{Fid93} systems. However, these
theories are rooted in traditional notions of data refinement, where
refinement relations are between concrete and abstract states. In the
presence of fine-grained atomicity and truly concurrent behaviour
(e.g., multi-core computing, real-time controllers), proofs of
refinement are limited by the information available within a single
state, and hence, reasoning can often be more difficult than
necessary. Furthermore, the behaviours of corresponding concrete and
abstract steps may not always match, and hence, reasoning can
sometimes be unintuitive, e.g., for the state-based data refinement in
\refsec{sec:background}, a concrete step that loads a variable
corresponds to an abstract step that evaluates a guard.
% One is sometimes required to match a statement that reads a variable
% in a concrete program with a statement that evaluates an expression
% in the abstract program (e.g., see \refsec{sec:background}), and
% hence, reasoning is not always intuitive.

When reasoning about concurrent and real-time systems, one is often
required to refer to a system's evolution over time as opposed to its
current state at a single point in time. This paper therefore presents
a method for verifying data refinement using a framework that allows
one to consider the intervals within which systems execute
\cite{DDH12,DH12MPC,Mos00,ZH04}. Thus, instead of reasoning over the
pre and post states of each component, one is able to reason about the
component's behaviour over an interval, which may comprise several
atomic steps. The concurrent execution of two or more processes is
defined as the conjunction of the behaviour of each process in the
same interval \cite{AL95,Heh90}; hence, reasoning about a component
naturally takes into account the behaviour of the component's
environment (e.g., other concurrently executing processes). Using an
interval-based framework enables us to incorporate methods for
apparent states evaluation \cite{DDH12,DH12MPC,HBDJ13}, which allows
one to take into account the low-level non-determinism of expression
evaluation at a high level of abstraction.

% Interval-based frameworks present novel paradigms for modelling and
% reasoning about both discrete \cite{DDH12,Mos00} and continuous
% systems \cite{DH12MPC,ZH04} in the presence of both interleaved and
% true concurrency.  

The main contribution of this paper is an interval-based method for
verifying data refinement, simplifying data refinement proofs in the
presence of true concurrency. A forward simulation rule for
interval-based refinement is developed, and several methods of
decomposing proof obligations are presented, including mixed-mode
refinement, which enables one to establish different refinement
relations over disjoint parts of the state space. We present our
theory at the semantic level of interval predicates, i.e., without
consideration of any particular programming framework. Hence, the
theory can be applied to any existing framework such as action
systems, Z, etc. by mapping the syntactic constructs to our interval
predicate semantics. The aim of our work is to reason about programs
with fine-grained atomicity and real-time properties, as opposed to
programs written in, say, Java that allows specification of
coarse-grained atomicity using \texttt{synchronized} blocks.

Background material for the paper is
presented in Sections \ref{sec:background} and
\ref{sec:interv-based-reas}, clarifying our notions of state-based
refinement and interval-based reasoning. Our interval-based refinement
theory is presented in \refsec{sec:gener-theory-refin}, which includes
a notion of forward simulation with respect to intervals and methods
for proof decomposition. Methods for reasoning about fine-grained
concurrency and a proof of our running example is presented in
\refsec{sec:fine-grain-atom}.

\section{State-based data refinement}
\label{sec:background}

\begin{figure}[t]
  \begin{minipage}[b]{0.44\columnwidth}
    \centering
    $
    \begin{array}[t]{@{}l@{}}
      AInit: \neg grd \\
      \hline 
      \begin{array}[t]{@{}l@{}|@{}l@{}}
      \begin{array}[t]{@{}l@{}}
        \hfill \textrm{Process $ap$} \hfill \\
        \hline
        \begin{array}[t]{@{}l@{~}l@{}}
          \ \ ap_1:& {\bf if}\ grd \Then \ \  \\
          \ \ ap_2:& \ \ \ \ m \asgn 1 \\
          \ \ ap_3:& \Else m \asgn 2 \Fi\ 
        \end{array}
      \end{array}
      &
      \begin{array}[t]{@{}l@{}}
        \hfill \textrm{Process $aq$} \hfill \\
        \hline
        \begin{array}[t]{@{}l@{~}l@{}}
          \ \ aq_1:& \If b \Then \\
          \ \ aq_2:& \ \ \ \ grd \asgn true  \\
          \ \ aq_3:& \Else {\bf skip} \Fi \\
        \end{array}
      \end{array}
    \end{array}
  \end{array}
    $
    \caption{Abstract program with guard $grd$}
    \label{fig:abs-ref-atom}
  \end{minipage}
  \hfill
  \begin{minipage}[b]{0.49\columnwidth}
    \centering
    $
    \begin{array}[t]{@{}l@{}}
      CInit: v \leq u < \infty \\
      \hline 
    \begin{array}[t]{@{}l@{}|@{}l@{}}
      \begin{array}[t]{@{}l@{}}
        \hfill \textrm{Process $cp$} \hfill \\
        \hline
        \begin{array}[t]{@{}l@{~}l@{}}
          \ \ cp_1:& {\bf if}\ u < v \Then \ \  \\
          \ \ cp_2:& \ \ \ \ m \asgn 1 \\
          \ \ cp_3:& \Else m \asgn 2 \Fi
        \end{array}
      \end{array}
      &
      \begin{array}[t]{@{}l@{}}
        \hfill \textrm{Process $cq$} \hfill \\
        \hline
        \begin{array}[t]{@{}l@{~}l@{}}
          \ \ cq_1:& \If 0 < u \Then \\
          \ \ cq_2:& \ \ \ \ v \asgn \infty \\
          \ \ cq_3:& \Else v\asgn -\infty \Fi
        \end{array}
      \end{array}
    \end{array}
  \end{array}
    $
    \caption{Concrete program with guard $u < v$}
    \label{fig:conc-ref-atom}
  \end{minipage}\medskip
  
\end{figure}

Consider the abstract program in \reffig{fig:abs-ref-atom}, written in
the style of Feijen and van Gasteren \cite{FvG99}, which consists of
variables $grd, b \in \bool$, $m \in \nat$, initialisation $AInit$ and
processes $ap$ and $aq$. Process $ap$ is a sequential program with
labels $ap_1$, $ap_2$, and $ap_3$ that tests whether $grd$ holds
(atomically), then executes $m \asgn 1$ if $grd$ evaluates to $true$
and executes $m \asgn 2$ otherwise. Process $aq$ is similar. The
program executes by initialising as specified by $AInit$, and then
executing $ap$ and $aq$ concurrently by interleaving their atomic
statements.

An initialisation may be modelled by a relation, and each label
corresponds to an atomic statement, whose behaviour may also be
modelled by a relation. Thus, a program generates a set of
\emph{traces}, each of which is a sequence of states (starting with
index $0$). Program counters for each process are assumed to be
implicitly included in each state to formalise the control flow of a
program \cite{Don09}, e.g., the program in \reffig{fig:abs-ref-atom}
uses two program counters $pc_{ap}$ and $pc_{aq}$, where $pc_{ap} =
ap_1$ is assumed to hold whenever control of process $ap$ is at
$ap_1$. After execution of $ap_1$, the value of $pc_{ap}$ is updated
so that either $pc_{ap} = ap_2$ or $pc_{ap} = ap_3$ holds, depending
on the outcome of the evaluation of $grd$.

One may characterise traces using an
\emph{execution}, which is a sequence of labels starting with
initialisation.  For example, a possible execution of the program in
\reffig{fig:abs-ref-atom} is
\begin{equation}
\aang{AInit, ap_1, aq_1, aq_2, ap_3}\label{eq:8}
\end{equation}
Using `.'  for function application, an execution $ex$ corresponds to
a trace $tr$ iff for each $i \in \dom.ex$, $(tr.i, tr.(i+1)) \in ex.i$
and either $\dom.tr = \dom.ex = \nat$ or $size.(\dom.ex) <
size.(\dom.tr)$.  An execution $ex$ is \emph{valid} iff $\dom.ex \neq
\emptyset$, $ex.0$ is an initialisation, and $ex$ corresponds to at
least one trace, e.g., \refeq{eq:8} above is valid. Not every
execution is valid, e.g., $\aang{AInit, ap_1, ap_2}$ is invalid
because execution of $ap_1$ after $AInit$ causes $grd$ to evaluate to
$false$ and $pc_{ap}$ to be updated to $ap_3$, and hence, statement
$ap_2$ cannot be executed. Note that valid executions may not be
complete; an extreme
example is $\aang{AInit, AFin}$, where the execution is finalised
immediately after initialisation.

Now consider the more concrete program in \reffig{fig:conc-ref-atom}
that replaces $grd$ by $u < v$ and $b$ by $0 < u$. Note that $u$ and $v$
are fresh with respect the program in
\reffig{fig:abs-ref-atom}. Initially, $v \leq u < \infty$
holds. Furthermore, $cq$ (modelling the concrete environment of $cp$)
sets $v$ to $\infty$ if $u$ is positive and to $- \infty$
otherwise. One may be interested in knowing whether the program in
\reffig{fig:conc-ref-atom} \emph{data refines} the program in
\reffig{fig:abs-ref-atom}, which defines conditions for the program in
\reffig{fig:abs-ref-atom} to be substituted by the program in
\reffig{fig:conc-ref-atom} \cite{deRoever98}. This is possible if
every execution of the program in \reffig{fig:conc-ref-atom} has a
corresponding execution of the program in \reffig{fig:abs-ref-atom},
e.g., concrete execution $\aang{CInit, cp_1,cq_1, cq_2, cp_3}$ has a
corresponding abstract execution \refeq{eq:8}.

In general, representation of data within a concrete program differs
from the representation in the abstract, and hence, one must
distinguish between the disjoint sets of \emph{observable} and
\emph{representation} variables, which respectively denote variables
that can and cannot be observed. For example, $grd$ in
\reffig{fig:abs-ref-atom} and $u$, $v$ in \reffig{fig:conc-ref-atom}
cannot both be observable because the types of these variables are
different in the two programs. To verify data refinement, the abstract
and concrete programs may therefore also be associated with
\emph{finalisations}, which are relations between a representation and
an observable state. Different choices for the finalisation allow
different parts of the program to become observable and affect the
type of refinement that is captured by data refinement
\cite{DB03,DB07,DB09}. For the programs in Figures
\ref{fig:abs-ref-atom} and \ref{fig:conc-ref-atom}, we assume
finalisations make variable $m$ observable. Hence, Figure
\ref{fig:abs-ref-atom} is data refined by Figure
\ref{fig:conc-ref-atom} if $ap$ is able to execute $ap_2$ (and $ap_3$)
whenever $cp$ is able to execute $cp_2$ (and $cp_3$, respectively).
We define a \emph{finalised execution} of a program to be a valid
execution concatenated with the finalisation of the program, e.g.,
$\aang{AInit, ap_1, aq_1, aq_2, ap_3, AFin}$ is a finalised execution
of the program in \reffig{fig:abs-ref-atom} generated from the valid
execution \refeq{eq:8}. Valid executions are not necessarily complete,
and hence, one may observe the state in the ``middle'' of a program's
execution.

To define data refinement, we assume that an initialisation is a
relation from an observable state to a representation state, each
label corresponds to a statement that is modelled by a relation
between two representation states, and a finalisation is a relation
from a representation state to an observable state. Assuming `$\semi$'
denotes relational composition and $id$ is the identity relation, we
define the \emph{composition} of a sequence of relations $R$ as
$$
comp.R \sdef \kif R = \aang{} \kthen id \kelse head.R \semi
comp.(tail.R)
$$  
which composes the relations of $R$ in order. We also define a
function $rel$, which replaces each label in an execution by the
relation corresponding to the statement of that label.

We allow finite stuttering in the concrete program, and hence, there
may not be a one-to-one correspondence between concrete and abstract
executions. Stuttering is reflected in an abstract execution by
allowing a finite number of labels `$Id$' to be interleaved with each
finalised execution of the abstract program, where $Id$ is assumed to
be different from all other labels, and the relation corresponding to
label $Id$ is always $id$. Data refinement is therefore defined with
respect to a \emph{correspondence function} that maps concrete labels
to abstract labels. A correspondence function is valid iff it maps
concrete initialisation to abstract initialisation, concrete
finalisation to abstract finalisation, each label of a non-stuttering
concrete statement to a corresponding abstract statement, and each
label of stuttering concrete statement to $Id$. For the rest of the
paper we assume that the correspondence functions under consideration
are valid. A program $C$ is a \emph{data refinement} of a program $A$
with respect to correspondence function $f$ iff for every finalised
execution $exc$ of $C$, $exa \sdef \lambda i : \dom.exc \st f.(exc.i)$
is a finalised execution of $A$ (with possibly finite stuttering) and
$comp.(rel.exc) \subseteq comp.(rel.exa)$ holds.

Proving data refinement directly from its formal definition is
infeasible.  Instead, one proves data refinement by verifying
\emph{simulation} between an abstract and concrete system, which
requires the use of \emph{refinement relation} to link the internal
representations of the abstract and concrete programs. We assume that
a relation $r \in X \rel Y$ is characterised by a function $fr \in X
\fun Y \fun \bool$ where $(x, y) \in r$ iff $fr.x.y.$ hold. As
depicted in \reffig{fig:dref}, a refinement relation $ref$ is a
\emph{forward simulation} between a concrete and abstract system if:
\begin{enumerate}
\item whenever the concrete system can be initialised from an
  observable state $\rho$ to obtain a concrete representation state
  $\tau_0$, it must be possible to initialise the abstract system from
  $\rho$ to result in abstract representation state $\sigma_0$ such
  that $ref.\sigma_0.\tau_0$ holds,
\item for every non-stuttering concrete statement $cs$, abstract state
  $\sigma$ and concrete state $\tau$, if $ref.\sigma.\tau$ holds and
  $cs$ relates $\tau$ to $\tau'$, then there exists an abstract state
  $\sigma'$ such that the abstract statement that corresponds to $cs$
  relates $\sigma$ to $\sigma'$ and $ref.\sigma'.\tau'$ holds,
% and end in state $\sigma'$
%   such that $ref.\sigma'.\tau'$ holds whenever $ref.\sigma.\tau$
%   holds, starting from state
%   $\tau$ and ending in state $\tau'$, the corresponding abstract
%   statement can start from state $\sigma$, 

\item for every stuttering concrete statement starting from state
  $\tau$ and ending in state $\tau'$, $ref.\sigma.\tau'$ holds
  whenever $ref.\sigma.\tau$ holds,
\item finalising any abstract state $\sigma$ (using the abstract
  system's finalisation) and concrete state $\tau$ (using the concrete
  system's finalisation) results in the same observable state whenever
  $ref.\sigma.\tau$ holds.
\end{enumerate}

For models of computation that assume instantaneous guard evaluation
\cite{HBDJ13}, establishing a data refinement between the programs in
Figures \ref{fig:abs-ref-atom} and \ref{fig:conc-ref-atom} with
respect to a correspondence function that maps $cp_i$ to $ap_i$ and
$cq_i$ to $aq_i$ for $i \in \{1,2,3\}$ is straightforward. In
particular, it is possible to prove forward simulation using $pcuv$
below as the refinement relation, where $\sigma$ and $\tau$ are
abstract and concrete states, respectively.
\begin{eqnarray*}
  uv.\sigma.\tau & \sdef & (\sigma.grd = (\tau.u < \tau.v)) \land (\sigma.b
  = (0 < \tau.u)) \land (\sigma.m = \tau.m)
  \\
  pcuv.\sigma.\tau & \sdef & uv.\sigma.\tau \land \all i :
  \{1,2,3\} \st (\sigma.pc_{ap} = ap_i \imp \tau.pc_{cp} = cp_i) \land
  (\sigma.pc_{aq} = aq_i \imp \tau.pc_{cq} = cq_i) 
\end{eqnarray*}

\begin{figure}[h]
    \centering
    \input{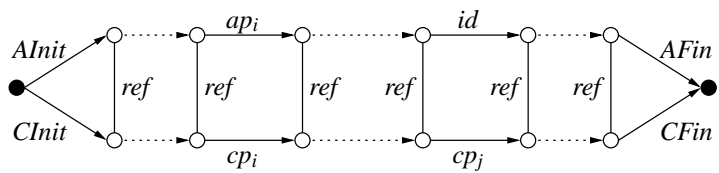}
    \caption{Data refinement via simulation}
    \label{fig:dref}

\end{figure}

\begin{figure}[t ]
    \begin{minipage}[t]{0.49\columnwidth}
    \centering $\begin{array}[t]{@{}l@{}}
      CInit: v \leq u < \infty \\
      \hline
      \begin{array}[t]{@{}l@{}|@{}l@{}}
        \begin{array}[t]{@{}l@{}}
          \hfill \textrm{Process $cp$} \hfill \\
          \hline
          \begin{array}[t]{@{}r@{~}l@{}}
            \begin{array}[c]{@{}r@{}}
              cp_{1.1}: \\
              cp_{1.2}: 
            \end{array}
            &
            \left(\begin{array}[c]{@{}l@{}}
                k_u \asgn u \ch {}\\
                k_v \asgn v
              \end{array}\right)
            \\
            & \sqcap \\
            \ \
            \begin{array}[c]{@{}r@{}}
              cp_{1.3}: \\
              cp_{1.4}: 
            \end{array}
            &
            \left(\begin{array}[c]{@{}l@{}}
                k_v \asgn v \ch {}\\
                k_u \asgn u
              \end{array}\right)
            \ch \\
            \ \ cp_{1.5}:& {\bf if}\ k_u < k_v \Then \dots \ \ 
          \end{array}
        \end{array}
        &
        \begin{array}[t]{@{}l@{}}
          \hfill \textrm{Process $cq$} \hfill \\
          \hline
          \begin{array}[t]{@{}l@{~}l@{}}
            \ \ cq_1:& \If 0 < u \Then \\
            \ \ cq_2:& \ \ \ \ v \asgn \infty \\
            \ \ cq_3:& \Else v\asgn -\infty \Fi
          \end{array}
        \end{array}
      \end{array}
    \end{array}$
    \caption{Making the atomicity of expression evaluation in
      \reffig{fig:conc-ref-atom} explicit}
    \label{fig:split}
  \end{minipage}
  \hfill
    \begin{minipage}[t]{0.49\columnwidth}
    \centering 
    $
    \begin{array}[t]{@{}c@{~}|@{~}c@{}}
      \textrm{Concrete label} & \textrm{Abstract label}
      \\
      \hline 
      CInit & AInit \\
      cp_{1.1}, cp_{1.3}, cp_{1.5}& Id \\
      cp_{1.2}, cp_{1.4} & ap_1 \\
      cp_i \textrm{\quad for $i \in \{2,3\}$} & ap_i \\
      cq_i \textrm{\quad for $i \in \{1,2,3\}$} & aq_i \\
      CFin  & AFin 
      \\{}
    \end{array}
    $
    \caption{Correspondence function for data refinement between
      \reffig{fig:split} and \reffig{fig:abs-ref-atom}} 
    \label{fig:split-correspondence}
  \end{minipage}
\end{figure}

In a setting with fine-grained atomicity, the program in
\reffig{fig:conc-ref-atom} may be difficult to implement because the
guard at $cp_{1}$ (which refers to multiple shared variables) is
assumed to be evaluated atomically. In reality, there may be
interference from other processes while an expression is being
evaluated \cite{HBDJ13}. Furthermore, the order in which variables are
read within an expression is often not fixed.  To take these
circumstances into account, we must consider the program in
\reffig{fig:split}, which splits the guard evaluation at $cp_1$ in
\reffig{fig:conc-ref-atom} into a number of smaller atomic statements
using fresh variables $k_u$ and $k_v$ that are local to process
$cp$. Via a non-deterministic choice `$\sqcap$', process $cp$ chooses
between executions $cp_{1.1} \ch cp_{1.2}$ and $cp_{1.3} \ch
cp_{1.4}$, which read the (global values) $u$ and $v$ into local
variables $k_u$ and $k_v$, respectively, in two atomic
steps. Evaluation of guard $u < v$ at $cp_1$ in
\reffig{fig:conc-ref-atom} is then replaced by evaluation of $k_u <
k_v$.

A proof of data refinement between the programs in Figures
\ref{fig:abs-ref-atom} and \ref{fig:conc-ref-atom} using forward
simulation with respect to $uv$ is now more difficult because an
(atomic) instantaneous evaluation of $grd$ has been split into several
atomic statements.  A data refinement with respect to a naive
correspondence function that matches $cp_{i}$ for $i \in \{1.1, 1.2,
1.3, 1.4\}$ with $Id$, $cp_{1.5}$ with $ap_1$, and $cq_i$ with $aq_i$
for $i \in \{1,2\}$ cannot be verified using forward simulation.
Instead, one must use the correspondence function in
\reffig{fig:split-correspondence}. Note that this correspondence
function is not intuitive because, for example, execution of
$cp_{1.4}$ (which reads $u$) is matched with execution of $ap_1$
(which tests $grd$), but is necessary because execution of $cp_{1.4}$
determines the outcome of the future evaluation of the guard at
$cp_{1.5}$. The refinement relation used to prove forward simulation
is more complicated than $pcuv$ (details are elided, but the relation
can be constructed using the correspondence function in
\reffig{fig:split-correspondence}).

Such difficulties in verifying a relatively trivial modification
expose the complexities in stepwise refinement of concurrent
programs. Further issues arise in the context of real-time properties
e.g., transient properties cannot be properly addressed by an inherent
interleaving model \cite{DH10,DH12MPC}.

This paper presents an interval-based semantics for the systems under
consideration, an interval-based interpretation of data refinement in
the framework, and a rule akin to forward simulation for proving data
refinement. We believe that these theories alleviate many of these
issues in state-based reasoning, requiring less creativity on the part
of the verifier. For example, the correspondence function always maps
each concrete process to an abstract process. By reasoning about the
traces of a system over an interval, we are able to capture the effect
of a number of atomic statements and interference from the environment
at a high-level of abstraction. Unlike the state-based approach
described above, which only captures interleaved concurrency,
interval-based approaches also allow one to model truly concurrent
behaviour. By modifying the type of an interval, one can take both
discrete and continuous system behaviours into account.

\section{Interval-based reasoning}
\label{sec:interv-based-reas}

Our generic theory of refinement is based on interval predicates,
generalising frameworks that model programs as relations between
pre/post states. We have applied our interval-based methodology to
reason about both concurrent \cite{DD12,DDH12} and real-time programs
\cite{DH12MPC,DHMS12}.

An {\em interval} in an ordered set $\Phi \subseteq \real$ is a
contiguous subset of $\Phi$, i.e., the set of all intervals of $\Phi$
is given by:
$$
Intv_\Phi \sdef \{\Delta \subseteq \Phi \mid \all t, t' : \Delta \st
\all t'' : \Phi \st t \leq t'' \leq t' \imp t'' \in \Delta\}
$$ 
We assume the existence of elements $-\infty, \infty \notin \Phi$ such
that $-\infty < t < \infty$ for each $t \in \Phi$. 
$Intv_\Phi$ may be used to model both discrete (e.g., by picking $\Phi
= \integer$) and continuous (by picking $\Phi = \real$)
systems. 

We define the following predicates, which may be used to identify
empty intervals, and intervals with a finite and infinite upper bound.
\begin{align*}
  \Empty.\Delta\ \  \sdef\ \ & \Delta = \emptyset
  & 
  \Fin.\Delta \ \ \sdef\ \  & \Empty.\Delta \lor (\exists t : \Delta \st \all t' : 
  \Delta \st t' \leq t)
  & 
  \Inf.\Delta\ \  \sdef\ \ & \neg \Fin.\Delta %(\lub.\Delta = \infty)
\end{align*}
One must often reason about two \emph{adjoining} intervals, i.e.,
intervals that immediately precede/follow another. For $\Delta_1,
\Delta_2 \in Intv_\Phi$, we define
\begin{eqnarray*}
\Delta_1 \adjoins \Delta_2
& \sdef &
  \begin{array}[t]{@{}l@{}}
    (\all t_1: \Delta_1, t_2: \Delta_2 \st t_1 < t_2)
    % (\lub.\Delta_1 \leq \glb.\Delta_2) \land 
    \land
    (\Delta_1 
    \cup \Delta_2 \in Intv_\Phi)% \land \Empty.(\Delta_1 \cap\Delta_2)
  \end{array}
\end{eqnarray*}
Thus, $\Delta_1 \adjoins \Delta_2$ holds iff $\Delta_2$ follows
$\Delta_1$ and the union of $\Delta_1$ and $\Delta_2$ forms an
interval (i.e., $\Delta_1$ and $\Delta_2$ are contiguous across their
boundary).  Note that adjoining intervals are disjoint and that both
$\Delta \adjoins \emptyset$ and $\emptyset \adjoins \Delta$ hold
trivially for any interval $\Delta$.

A \emph{state} over $V \subseteq Var$ is of type $State_V \sdef V \fun
Val$, where $Var$ is the type of a variable and $Val$ is the generic
type of a value. A {\em state predicate} is of type $StatePred_V \sdef
State_V \fun \bool$.  A \emph{stream} of behaviours over $State_V$ is
given by the function $Stream_{\Phi,V} \sdef \Phi \fun State_V$, which
maps each element of $\Phi$ to a state over $V$. To facilitate
reasoning about specific parts of a stream, we use \emph{interval
  predicates}, which have type $IntvPred_{\Phi,V} \sdef Intv_\Phi \fun
Stream_{\Phi, V} \fun \bool$. A visualisation of an interval predicate
over $Z \subseteq Var$ is given in \reffig{fig:inv-pred-vis}. The
stream $z \in Stream_{\Phi,Z}$ maps each time to a state over $Z$ and
the interval predicate depicted in the figure maps $\Delta$ and $z$ to
a boolean.

\begin{figure}[t]
  \centering 
  \scalebox{0.8}{\input{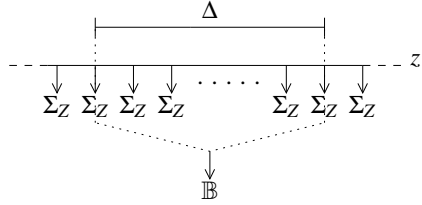}}
  \caption{Interval predicate visualisation}
  \label{fig:inv-pred-vis}
\end{figure}

We assume pointwise lifting of operators on stream and interval
predicates in the normal manner, e.g., if $g_1$ and $g_2$ are interval
predicates, $\Delta$ is an interval and $s$ is a stream, we have $(g_1
\land g_2).\Delta.s = (g_1.\Delta.s \land g_2.\Delta.s)$.  The
\emph{chop} operator `;' is a basic operator on two interval
predicates \cite{DDH12,DH12MPC,Mos00,ZH04}, where $(g_1 \ch
g_2).\Delta$ holds iff either interval $\Delta$ may be split into two
parts so that $g_1$ holds in the first and $g_2$ holds in the second,
or the upper bound of $\Delta$ is $\infty$ and $g_1$ holds in
$\Delta$. Thus, for a stream $s$, we define:
\begin{eqnarray*}
  (g_1 \ch g_2).\Delta.s & \sdef &
  \begin{array}[t]{@{}l@{}}
    \left(\begin{array}[c]{@{}l@{}}
        \exists \Delta_1, \Delta_2 : Intv_\Phi  \st
        (\Delta = \Delta_1
        \cup \Delta_2) \land  (\Delta_1 \adjoins \Delta_2) 
        \land g_1.\Delta_1.s \land g_2.\Delta_2.s
      \end{array}\right) \lor \\
    (\Inf.\Delta \land g_1.\Delta.s)
  \end{array} 
\end{eqnarray*}
Note that $\Delta_1$ may be empty, in which case $\Delta_2 = \Delta$,
and similarly $\Delta_2$ may empty, in which case $\Delta_1 = \Delta$,
i.e., both $(\Empty \ch g) \equiv g$ and $g \equiv (g \ch \Empty)$
trivially hold, where $\Empty.\Delta.s \sdef (\Delta = \emptyset)$ for
all streams $s$. Furthermore, in the definition of chop, we allow the
second disjunct $\Inf.\Delta \land g_1.\Delta$ to enable $g_1$ to
model an infinite (divergent or non-terminating) program.

To model looping of a behaviour modelled by interval predicate $g$, we
use an iteration operator `$g^\omega$', which is defined as the
greatest fixed point of $\lambda h \st g \ch h \lor \Empty$. Interval
predicates are assumed to be ordered using implication `$\entails$'
and the greatest fixed point allows $g^\omega$ to model both finite
(including 0) and infinite iteration \cite{DHMS12}.
\begin{eqnarray*}
  g^\omega & \sdef &  \nu z \st (g \ch z) \lor
  \Empty 
\end{eqnarray*}
We say that $g$ \emph{splits} iff $g \entails (g \ch g)$ and $g$
\emph{joins} iff $(g \ch g^\omega) \entails g$. If $g$ splits, then
whenever $g$ holds in an interval $\Delta$, $g$ also holds in any
subinterval of $\Delta$. If $g$ joins, then $g$ holds in $\Delta$
whenever there is a partition of $\Delta$ such that $g$ holds in each
interval of the partition. Note that if $g$ splits, then $g \entails
g^\omega$ \cite{DHMS12}. Splits and joins properties are useful for
decomposing proof obligations, for instance, both of the following
hold.
\begin{align}
  \label{eq:18}
  (g \entails g_1) \land (g
  \entails g_2)  &\ \  \imp\ \  (g
  \entails g_1 \ch g_2)  
  & \textrm{provided $g$ splits}
  \\
  \label{eq:19}
  (g \land g_1) \ch (g \land g_2)  &\ \
  \entails\ \  g \land (g_1 \ch g_2)
  & \textrm{provided $g$ joins}
\end{align}
One must often state that a property only holds for a non-empty
interval, and that a property holds for an immediately preceding
interval. To this end, we define:
\begin{align*}
  \NE g \ \ \sdef\ \ & g \land \neg \Empty & \prev g.\Delta.s \ \
  \sdef\ \ & \exists \Delta_0 : Intv_\Phi \st \Delta_0 \adjoins \Delta
  \land g.\Delta_0.s
\end{align*}
Note that if $g$ holds in an empty interval, then $\prev g$ trivially
holds. Also note how interval predicates allow the behaviour outside
the given interval to be stated in a straightforward manner because a
stream encapsulates the entire behaviour of a system.  We define the
following operators to formalise properties over an interval using a
state predicate $c$ over an interval $\Delta$ in stream $s$.
\begin{align*}
  \Always c.\Delta.s  \ \ \sdef \ \ &  \all t : \Delta \st c.(s.t) & 
  \Eventually c.\Delta.s  \ \ \sdef\ \  &  \exists t : \Delta \st c.(s.t)
\end{align*}
That is $\Always c.\Delta.s$ holds iff $c$ holds for each state $s.t$
where $t \in \Delta$ and $\Eventually c.\Delta.s$ holds iff $c$ holds
in some state $s.t$ where $t \in \Delta$. Note that $\Always c$
trivially holds for an empty interval, but $\Eventually c$ does
not. For the rest of this paper, we assume that the underlying type of the
interval under consideration is fixed. Hence, to reduce notational
complexity, we omit $\Phi$ whenever possible.

\begin{example} 
  \label{ex:uv-ex-ip}
  We present the interval-based semantics of the programs in Figures
  \ref{fig:abs-ref-atom} and \ref{fig:conc-ref-atom}. Interval-based
  methods allow one to model true concurrency by defining the
  behaviour of a parallel composition $p \| q$ over an interval
  $\Delta$ as the conjunction of the behaviours of both $p$ and $q$
  over $\Delta$ (see \cite{DD12,DDH12,DH12MPC} for more
  details). Others have also treated parallel composition as
  conjunction, but in an interleaving framework with predicates over
  states as opposed to intervals (e.g., \cite{AL95,Heh90}). Sequential
  composition is formalised using the chop operator. We assume
  $\grd{grd}$ denotes an interval predicate that formalises evaluation
  of $grd$. Details of guard evaluation are given in
  \refsec{sec:appar-stat-eval}.  The interval-based semantics of the
  programs in Figures \ref{fig:abs-ref-atom} and
  \ref{fig:conc-ref-atom} are respectively formalised by the interval
  predicates \refeq{eq:777}, \refeq{eq:88}, \refeq{eq:55} and
  \refeq{eq:66} below. Assuming that $\rho$ is an observable state,
  conditions \refeq{eq:777} and \refeq{eq:88} formalise the behaviours
  of $AInit.\rho$ and $CInit.\rho$, respectively. Assuming that $\rho$
  has an observable variable $M$ that is represented internally by
  $m$, and that $\sigma$ and $\tau$ are abstract and concrete states,
  respectively, the behaviours of both $AFin.\sigma.\rho$ and
  $CFin.\sigma.\rho$ are formalised by \refeq{eq:10} and
  \refeq{eq:20}, respectively. We assume `;' binds more tightly than
  binary boolean operators.
  \begin{eqnarray}
    & 
    \underline{\Always \neg grd}
   \label{eq:777}
   \\
   & 
   \underline{\Always (v \leq u < \infty)}
   \label{eq:88}
   \\
   & 
   \begin{array}[b]{@{}l@{}}
        \overbrace{(\grd{grd} \ch \NE{\Always(m =  1)} \lor \grd{\neg grd} \ch
          \NE{\Always(m =  2)})}^\text{Process $ap$}
        \land 
        \overbrace{(\grd{b} \ch \underline{\Always grd} \lor \grd{\neg b})
      }^\text{Process $aq$}
      \end{array}\label{eq:55}
    \\
    & 
    \begin{array}[c]{@{}l@{}}
        \overbrace{(\grd{u < v} \ch \NE{\Always(m =  1)} \lor \grd{u \geq v} \ch
          \NE{\Always(m =  2)})}^\text{Process $cp$}
        \land 
        \overbrace{(\grd{0 < u} \ch \underline{\Always (v = \infty)}
          \lor \grd{0 \geq u} \ch
        \underline{\Always (v = -\infty)})}^\text{Process $cq$}
    \end{array}\label{eq:66}
    \\
    & \sigma.m = \rho.M 
    \label{eq:10}
    \\
    & \tau.m = \rho.M
    \label{eq:20}
  \end{eqnarray}
  By \refeq{eq:777}, $AInit$ returns an interval predicate
  $\underline{\Always \neg grd}$, which states that $\neg grd$ holds
  throughout the given interval, and the interval is
  non-empty. Condition \refeq{eq:88} is similar. Condition
  \refeq{eq:55} models the concurrent behaviour of processes $ap$ and
  $aq$. Process $ap$ either behaves as $\grd{grd} \ch \NE{\Always(m =
    1)}$ ($grd$ evaluates to true, then the behaviour of $m \asgn 1$
  holds) or $\grd{\neg grd} \ch \NE{\Always(m = 2)}$ ($\neg grd$
  evaluates to true, then the behaviour of $m \asgn 2$ holds, i.e.,
  the interval under consideration is non-empty and $m = 2$ holds
  throughout the interval). Process $aq$ is similar, but also models
  the assignments to
  $grd$. % Assignment $grd \asgn true$ guarantees the existence of a
  % non-empty interval for which $\Always grd$ holds. Similarly, $m
  % \asgn 1$ guarantees the existence of a non-empty interval for which
  % $m = 1$ holds.

  Note that the points at which the intervals are chopped within
  \refeq{eq:55} and \refeq{eq:66} are unsynchronised. For example,
  suppose process $ap$ behaves as $\grd{grd} \ch \NE{\Always(m = 1)}$
  and $aq$ behaves as $\grd{b} \ch \underline{\Always grd}$ within
  interval $\Delta$ of stream $y$, i.e,. $(\grd{grd} \ch \NE{\Always(m
    = 1)}\land \grd{b} \ch \underline{\Always grd}).\Delta.y$ holds
  for some interval $\Delta$ and abstract stream $y$. By pointwise
  lifting, this is equivalent to $(\grd{grd} \ch \NE{\Always(m =
    1)}).\Delta.y \land (\grd{b} \ch \underline{\Always
    grd}).\Delta.y$. The two processes may now choose to split
  $\Delta$ independently. This includes the possibility of $\Delta$
  being split at the same point, which occurs if both guard
  evaluations are completed at the same time.
\end{example}

\section{A general theory of refinement}
\label{sec:gener-theory-refin}

We aim to verify data refinement between systems whose behaviours are
formalised by interval predicates. Hence, we present interval-based
data refinement (\refsec{sec:data-refinement-1}) and define
interval-based refinement relations (\refsec{sec:interval-relations}),
enabling formalisation of refinement relations in an interval-based
setting.  \refsec{sec:gener-forw-simul} presents our generalised proof
method, which is inspired by state-based forward simulation
techniques. \refsec{sec:decomp-simul} presents a number of
decomposition techniques for forward simulation.

\subsection{Data refinement}
\label{sec:data-refinement-1}
Existing frameworks for data refinement model concurrency as an
interleaving of the atomic system operations
\cite{BvW94,DB99,DB03,deRoever98}. This allows one to define a
system's execution using its set of operations. The traces of a system
after initialisation are generated by repeatedly picking an enabled
operation from the set non-deterministically then executing the
operation. Such execution models turn out to be inadequate for
reasoning about truly concurrent behaviour, e.g., about
\emph{transient} properties in the context of real-time systems
\cite{DH12MPC}. The methodology in this paper aims to allow modelling
of truly concurrent system behaviour. Each operation is associated
with exactly one of the system processes and execution of a system
(after initialisation) over an interval $\Delta$ is modelled by the
conjunction of the behaviours of each operation over $\Delta$ (see
\refex{ex:uv-ex-ip}).  It is possible to obtain interleaved
concurrency from our truly concurrent framework via the inclusion of
permissions \cite{Boy03,DDH12}.

Action refinement for true concurrency in a causal setting is studied
in \cite{M-CW01}, and a modal logic for reasoning about true
concurrency is given in \cite{BC10}. Frameworks for concurrent
refinement in real-time contexts have also been proposed (e.g.,
\cite{FUKH96,SH93}). We are however not aware of a method that allows
data refinement under true concurrency.

We let $Proc$ denote the set of all process identifiers. For $P
\subseteq Proc$ and $N, Z \subseteq Var$, respectively denoting the
sets of observable and representation variables, a \emph{system} is
defined by a tuple:
\begin{eqnarray*}
  C & \sdef & (CI, (COp_p)_{p:P}, CF)_{N,Z}
\end{eqnarray*}
where $CI : State_N \fun IntvPred_{Z}$ models the initialisation,
$COp_p \in IntvPred_{Z}$ for each $p \in P$ model the system
processes, and $CF : State_Z \fun State_N \fun \bool$ denotes system
finalisation.  The set of observable states at the start and end of an
execution of system $C$ is given by:
\begin{eqnarray*}
  obs_N.C & \sdef & 
  \left \{
      (\rho, \rho') : State_N \times State_N
       \begin{array}[c]{@{~~}|@{~~}l@{}}
         \exists \Delta : Intv, z : Stream_{Z} \st \\
         (\prev CI.\rho
         \land \bigwedge_{p:P} COp_p).\Delta.z \land \exists t : \Delta
         \st CF.(z.t).\rho'
  \end{array}\right\}
\end{eqnarray*}
\begin{definition}
  \label{def:dref}
  For $P \subseteq Proc$, an abstract system $A \sdef (AI,
  (AOP_p)_{p:P}, AF)_{N,Z}$ is \emph{data refined} by a concrete
  system $C\sdef (CI, (COP_p)_{p:P}, CF) _{N,Z}$, denoted $A \sref C$
  iff $obs_N.C \subseteq obs_N.A$.
\end{definition}

\noindent It is trivial to prove that $\sref$ is a preorder (i.e., a
reflexive, transitive relation).

\begin{figure}[t]
  \centering
  \begin{minipage}[b]{0.49\columnwidth}
    \centering
    \scalebox{0.8}{\input{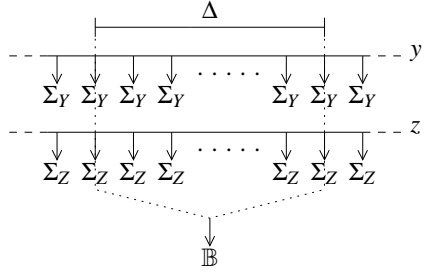}}
    \caption{Interval relation visualisation}
    \label{fig:inv-rel-vis}
  \end{minipage}
\end{figure}

Verification of \refdef{def:dref} directly is infeasible. In
state-based formalisms, data refinement is proved using
\emph{simulation}, which allows executions of the concrete system to
be matched to executions of the abstract \cite{deRoever98} (see
\reffig{fig:dref}). Previous work \cite{DDH12,DH12MPC} defines
operation refinement over a single state space. This cannot be used
for example to prove refinement between the programs in Figures
\ref{fig:abs-ref-atom} and \ref{fig:conc-ref-atom}. In this paper, we
develop simulation-based techniques for our interval-based framework
in \refsec{sec:gener-forw-simul}. The theory is based on interval
relations (\refsec{sec:interval-relations}), which enable one to
relate streams over two potentially different state spaces.

\subsection{Interval relations}
\label{sec:interval-relations}

Interval predicates enable one to reason about properties that take
time, however, only define properties over a single state
space. Proving data refinement via simulation requires one to relate
behaviours over a concrete state space to behaviours over an abstract
space. Hence, we combine the ideas behind state relations and interval
predicates and obtain \emph{interval relations}, which are relations
over an interval and two streams over potentially different state
spaces. The concept of interval relations is novel to this paper.

An \emph{interval relation} over $Y$ and $Z$ relates streams of $Y$
and $Z$ over intervals and is a mapping of type $IntvRel_{Y,Z}
\sdef Intv \fun Stream_{Y} \fun Stream_{Z} \fun \bool$. 
 \reffig{fig:inv-rel-vis} depicts a visualisation of an
interval relation over $Y, Z \subseteq Var$ where $z \in Stream_{Z}$
and $y \in Stream_Y$. Like interval predicates, we assume pointwise
lifting of operators over state and interval relations in the normal
manner. We extend interval predicate operators to interval relations,
for example:\smallskip

\hfill$\begin{array}{rcl}
  (R_1 \ch R_2).\Delta.y.z & \sdef &
  \begin{array}[t]{@{}l@{}}
    \left(\begin{array}[c]{@{}l@{}}
        \exists \Delta_1, \Delta_2 : Intv  \st
        (\Delta = \Delta_1
        \cup \Delta_2) \land  (\Delta_1 \adjoins \Delta_2) 
        \land R_1.\Delta_1.y.z \land R_2.\Delta_2.y.z
      \end{array}\right) \lor \\
    (\Inf.\Delta \land R_1.\Delta.y.z)
  \end{array} 
\end{array}$\hfill \smallskip {}

\noindent
A \emph{state relation} over $Y, Z \subseteq
Var$ % is a mapping from the state
% space of $Y$ to the state space of $Z$ and
is defined by its
characteristic function $StateRel_{Y,Z} \sdef State_Y \fun State_Z
\fun \bool$. Operators on state predicates may be extended to state
relations, e.g., for $r \in StateRel_{Y,Z}$ we define \smallskip

\hfill$\begin{array}{rcl}
  \Always r.\Delta.y.z  &  \sdef  &  \all t : \Delta \st r.(y.t).(z.t) 
\end{array}$
\hfill \smallskip 

\noindent If $R_1 \in IntvRel_{X, Y}$ and $R_2 \in IntvRel_{Y, Z}$
then for $\Delta \in Intv$, $x \in Stream_{X}$, $y \in
Stream_{Y}$, we define the composition of $R_1$ and $R_2$ as 
\smallskip 

\hfill$\begin{array}{rcl}
(R_1 \circ R_2).\Delta.x.z & \sdef &
\exists y : Stream_{Y} \st R_1.\Delta.x.y \land
R_2.\Delta.y.z
\end{array}$\hfill \smallskip 

% \noindent We introduce universal implication over interval relations
% as follows:
% \begin{align*}
%   R_1 \entails R_2 &\ \ \sdef \ \ \all \Delta :
%   Intv, y : Stream_{Y}, z : Stream_{Z} \st
%   R_1.\Delta.y.z \imp R_2.\Delta.y.z
% \end{align*}

\subsection{Generalised forward simulation}
\label{sec:gener-forw-simul}

In this section, we work towards an interval-based notion of forward
simulation, which is then shown to be a sufficient condition for
proving data refinement (\refdef{def:dref}).

\begin{figure}[t]
    \centering
    \input{intvfs.pspdftex}
    \caption{Visualisation of $\dataref{Y}{Z}{ref}{g}{h}$}
    \label{fig:intv-fs}
\end{figure}

We define simulation between abstract and concrete systems with
respect to an interval relation over the sets of representation
variables of the two systems. This definition requires that we define
equivalence between two streams over an interval. For streams $y$ and
$z$ and interval $\Delta$, we define a function\smallskip

\hfill$\begin{array}{rcl}
  \matches{\Delta}{y}{z}
  & \sdef &  (\Delta \dres y = \Delta \dres z)
\end{array}$\hfill\smallskip

\noindent where `$\dres$' denotes domain restriction. Thus
$\matches{\Delta}{y}z$ holds iff the states of $y$ and $z$
corresponding to $\Delta$ match, i.e., $\all t : \Delta \st y.t =
z.t$.  For $Y, Z \subseteq Var$, assuming that $g \in IntvPred_{Y}$
and $h \in IntvPred_{Z}$ model the abstract and concrete systems,
respectively, and that $ref \in IntvRel_{Y, Z}$ denotes the refinement
relation, we define a function $\dataref{Y}{Z}{ref}{g}{h}$ (see
\reffig{fig:intv-fs}), which denotes that $h$ \emph{simulates} $g$
with respect to $ref$.
\begin{eqnarray*}
  \dataref{Y}{Z}{ref}{g}{h} & \sdef & 
  \begin{array}[c]{@{}l@{}}
    \all z : Stream_{ Z}, \Delta, \Delta_0 :
    Intv, y_0 : Stream_{ Y} \st \\
    \qquad \qquad \begin{array}[t]{@{}l@{}}
      (\Delta_0 \adjoins \Delta) % \land \neg \Empty.\Delta_0
      \land
      ref.\Delta_0.y_0.z
      \land h.\Delta.z \imp \quad {} \\
      \hfill \exists y :  Stream_{ Y}
      \st \bmatches{\Delta_0}{y_0}{y} \land ref.\Delta.y.z \land g.\Delta.y
    \end{array}
  \end{array}  
\end{eqnarray*} 
Thus, if $\dataref{Y}{Z}{ref}{g}{h}$ holds, then for every concrete
stream $z$, interval $\Delta$ and abstract state $y$, provided that
\begin{enumerate}
\item $\Delta_0$ is an interval that immediately precedes $\Delta$,
\item $ref$ holds in the interval $\Delta_0$ between $y_0$
  and $z$, and
\item the concrete system (modelled by $h$) executes within $\Delta$
  in stream $z$
\end{enumerate}
then there exists an abstract stream $y$ that matches $y_0$ over
$\Delta_0$ such that
\begin{enumerate}
\item the abstract system executes over $\Delta$ in $y$, and
\item $ref$ holds between $y$ and $z$ over $\Delta$.
\end{enumerate}
A visualisation of $\dataref{Y}{Z}{ref}{g}{h}$ is given in
\reffig{fig:intv-fs} and is akin to matching a single non-stuttering
concrete step to an abstract step in state-based forward simulation
\cite{deRoever98}.  The following lemma establishes reflexivity and
transitivity properties for $\dataref{Y}{Z}{ref}{g}{h}$.
\begin{lemma} 
  \label{thm:dataref}
  Provided that $id.\sigma.\tau \sdef \sigma = \tau$.
  \begin{align}
    \dataref{X}{X}{\Always id}{g}{g} \TL{Reflexivity}
    \\
    \dataref{X}{Y}{ref_1}{f}{g} \land  \dataref{Y}{Z}{ref_2}{g}{h} 
    \ \ \imp \ \ & \dataref{X}{Z}{(ref_1 \circ
      ref_2)}{f}{h} \TL{Transitivity}
  \end{align}
\end{lemma}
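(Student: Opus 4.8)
For \textsc{Reflexivity}, I would unfold the definition of $\dataref{X}{X}{\Always id}{g}{g}$ directly. Fix $z : Stream_X$, intervals $\Delta_0 \adj{} \Delta$, and $y_0 : Stream_X$, and assume the antecedent: $(\Always id).\Delta_0.y_0.z$ and $g.\Delta.z$. The key observation is that $(\Always id).\Delta_0.y_0.z$ means $y_0$ and $z$ agree on $\Delta_0$, i.e., $\bmatches{\Delta_0}{y_0}{z}$ holds, so the witness $y := z$ works immediately: $\bmatches{\Delta_0}{y_0}{y}$ holds by transitivity of equality (indeed it is just the assumption), $ref.\Delta.y.z = (\Always id).\Delta.z.z$ holds trivially since every state equals itself, and $g.\Delta.y = g.\Delta.z$ holds by assumption. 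This step is essentially bookkeeping.

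For \textsc{Transitivity}, assume $\dataref{X}{Y}{ref_1}{f}{g}$ and $\dataref{Y}{Z}{ref_2}{g}{h}$, and prove $\dataref{X}{Z}{ref_1 \circ ref_2}{f}{h}$. Unfold the goal: fix $z : Stream_Z$, $\Delta_0 \adj{} \Delta$, $x_0 : Stream_X$, and assume $(ref_1 \circ ref_2).\Delta_0.x_0.z$ and $h.\Delta.z$. The first hypothesis expands, by definition of $\circ$, to the existence of some $y_0 : Stream_Y$ with $ref_1.\Delta_0.x_0.y_0$ and $ref_2.\Delta_0.y_0.z$. Now I would feed $(\Delta_0, \Delta, y_0, z)$ into the second simulation hypothesis $\dataref{Y}{Z}{ref_2}{g}{h}$: its antecedent $(\Delta_0 \adj{} \Delta) \land ref_2.\Delta_0.y_0.z \land h.\Delta.z$ is now discharged, so we obtain $y : Stream_Y$ with $\bmatches{\Delta_0}{y_0}{y}$, $ref_2.\Delta.y.z$, and $g.\Delta.y$. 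Next, feed $(\Delta_0, \Delta, x_0, y)$ into the first simulation hypothesis $\dataref{X}{Y}{ref_1}{f}{g}$: its antecedent needs $(\Delta_0 \adj{} \Delta) \land ref_1.\Delta_0.x_0.y \land g.\Delta.y$ — here $ref_1.\Delta_0.x_0.y$ is not immediate, since we only have $ref_1.\Delta_0.x_0.y_0$, but $\bmatches{\Delta_0}{y_0}{y}$ tells us $y_0$ and $y$ agree on $\Delta_0$, and interval relations depend only on the restriction of the streams to the interval (this is implicit in the $IntvRel$ type / definition via $\Delta \dres$), so $ref_1.\Delta_0.x_0.y$ follows. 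We thereby obtain $x : Stream_X$ with $\bmatches{\Delta_0}{x_0}{x}$, $ref_1.\Delta.x.y$, and $f.\Delta.x$. Finally, assemble the conclusion: $x$ is the required abstract witness, $\bmatches{\Delta_0}{x_0}{x}$ is already in hand, $f.\Delta.x$ is in hand, and $(ref_1 \circ ref_2).\Delta.x.z$ follows from $ref_1.\Delta.x.y \land ref_2.\Delta.y.z$ by definition of $\circ$ using witness $y$.

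The main obstacle — and the only place any genuine reasoning is needed — is the step that upgrades $ref_1.\Delta_0.x_0.y_0$ to $ref_1.\Delta_0.x_0.y$ using $\bmatches{\Delta_0}{y_0}{y}$. This requires that interval relations are insensitive to stream values outside the interval on which they are evaluated, which should follow from (or be noted as a consequence of) the intended reading of $IntvRel_{Y,Z}$ as acting on $\Delta \dres y$ and $\Delta \dres z$; if the paper does not already state this as a lemma, I would either add it as a small preliminary observation or restrict attention to such ``well-formed'' interval relations. Everything else is routine unfolding of the definitions of $\dataref{}{}{}{}{}$, $\circ$, and $\bmatches{}{}{}$, together with reflexivity and transitivity of equality on states.
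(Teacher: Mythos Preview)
Your proposal is correct and follows essentially the same route as the paper: for \textsc{Reflexivity} the paper simply declares the proof trivial, and for \textsc{Transitivity} both you and the paper expand $ref_1 \circ ref_2$ at $\Delta_0$, apply the second simulation to obtain an intermediate $y$, then apply the first simulation to obtain $x$, and reassemble $ref_1 \circ ref_2$ at $\Delta$. The one place you flag as needing care---transporting $ref_1.\Delta_0.x_0.y_0$ to $ref_1.\Delta_0.x_0.y$ via $\bmatches{\Delta_0}{y_0}{y}$---is exactly the step the paper also takes tacitly (its justification ``use $\matches{\Delta_0}{y_0}{y_1}$ and $\matches{\Delta}{y}{y_1}$'' presupposes the same invariance of interval relations and predicates under stream matching on the given interval). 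Your version is in fact slightly leaner: the paper inserts an auxiliary stream $y_1$ agreeing with $y_0$ on $\Delta_0$ and with $y$ on $\Delta$, but since the $y$ produced by the second simulation already satisfies $\bmatches{\Delta_0}{y_0}{y}$, one may take $y_1 = y$ and your direct argument goes through with a single use of the invariance assumption rather than three.
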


Simulation is used to define an interval-based notion of \emph{forward
  simulation} as follows. % The definition requires that the $ref$
% relation is prefix closed. We say $g \in IntvPred_Z$ is \emph{prefix
%   closed} iff\smallskip
% \hfill$\begin{array}{rcl}
%   g.\Delta.z & \imp & \all \Delta ' : prefix.\Delta \st g.\Delta'.z
% \end{array}$\hfill\smallskip
% \noindent where $prefix.\Delta$ denotes the set of all prefixes of $\Delta$.
% Prefix closure enables finalisation to be carried out at any time
% within the interval of execution. 
\begin{definition}[Forward simulation]
  \label{def:forward-sim}
  Suppose $P \subseteq Proc$, $A \sdef (AI, (AOp_p)_{p:P}, AF)_{N,Y}$
  is an abstract system, $C \sdef (CI, (COp_p)_{p:P}, CF)_{N,Z}$ is a
  concrete system, and $ref \in IntvRel_{Y,Z}$. We say $ref$ is a
  \emph{forward simulation} from $A$ to $C$ iff
  $\dataref{Y}{Z}{ref}{\bigwedge_{p:P} AOp_p}{\bigwedge_{p:P} COp_p}$
  and both of the following hold:
  \begin{eqnarray}
    \label{eq:9}
    \all z : Stream_Z, \Delta : Intv, \sigma \in State_N \st
    CI.\sigma.\Delta.z  & \imp & 
    \exists y : Stream_Y \st AI.\sigma.\Delta.y \land ref.\Delta.y.z
    \\
    \begin{array}[b]{@{}l@{}}
      \all z : Stream_{ Z}, y : Stream_Y, \Delta :
      Intv , \sigma :  State_N \st \all t : \Delta \st \qquad \\
      \hfill 
      ref.\Delta.y.z
      \land CF.(z.t% (\lub.\Delta)
      ).\sigma
    \end{array}
    & \imp & 
    AF.(y.t% (\lub.\Delta)
    ).\sigma  
    \label{eq:4}
  \end{eqnarray}
\end{definition}
% Prefix closure is a natural property
% of the refinement relations that we use to verify data refinement,
% e.g., any relations of the form $\Always r$ for a state relation $r$
% is trivially prefix closed. Hence, the requirement does not pose any
% real hindrance to verifying forward simulation. 

The following theorem establishes soundness of our forward simulation
rule with respect to interval-based data refinement.
\begin{theorem}[Soundness] 
  If $P \subseteq Proc$, $A \sdef (AI, (AOp_p)_{p:P}, AF)_{N,Y}$, and
  $C \sdef (CI, (COp_p)_{p:P}, CF)_{N,Z}$, then $A \sref C$ provided
  there exists a $ref \in IntvRel_{Y,Z}$ such that $ref$ is a forward
  simulation from $A$ to $C$.
\end{theorem}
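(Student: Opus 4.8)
The plan is to unfold both definitions and show that every pair $(\rho,\rho') \in obs_N.C$ already lies in $obs_N.A$. So suppose $(\rho,\rho') \in obs_N.C$, meaning there exist an interval $\Delta$ and a concrete stream $z \in Stream_Z$ with $(\prev CI.\rho \land \bigwedge_{p:P} COp_p).\Delta.z$ holding together with some $t \in \Delta$ satisfying $CF.(z.t).\rho'$. Unpacking $\prev$ gives an interval $\Delta_0$ with $\Delta_0 \adjoins \Delta$ and $CI.\rho.\Delta_0.z$. The first move is to apply the initialisation condition \refeq{eq:9} to the interval $\Delta_0$, the stream $z$, and the state $\rho$: this yields an abstract stream $y_1 \in Stream_Y$ with $AI.\rho.\Delta_0.y_1$ and $ref.\Delta_0.y_1.z$.

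Next I would invoke the simulation hypothesis $\dataref{Y}{Z}{ref}{\bigwedge_{p:P} AOp_p}{\bigwedge_{p:P} COp_p}$, instantiating its universally quantified variables with $z$, the intervals $\Delta$ and $\Delta_0$, and $y_0 := y_1$. The three antecedents are exactly what we have assembled: $\Delta_0 \adjoins \Delta$, $ref.\Delta_0.y_1.z$, and $(\bigwedge_{p:P} COp_p).\Delta.z$ (which follows from the conjunct in $obs_N.C$, since $\prev CI.\rho$ is a separate conjunct). The conclusion produces an abstract stream $y \in Stream_Y$ with $\bmatches{\Delta_0}{y_1}{y}$, $ref.\Delta.y.z$, and $(\bigwedge_{p:P} AOp_p).\Delta.y$. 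The matching condition $\matches{\Delta_0}{y_1}{y}$ is what lets us stitch $y$ together with the earlier initialisation: since $y$ and $y_1$ agree on $\Delta_0$ and $AI.\rho.\Delta_0.y_1$ depends only on the restriction of the stream to $\Delta_0$ (initialisation predicates, like all interval predicates over $\Delta_0$, are determined by $\Delta_0 \dres$), we get $AI.\rho.\Delta_0.y$, hence $\prev AI.\rho$ holds at $\Delta$ in $y$.

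Finally, for the finalisation: we still have the same witness $t \in \Delta$ with $CF.(z.t).\rho'$, and we now have $ref.\Delta.y.z$, so the finalisation condition \refeq{eq:4} (instantiated with $z$, $y$, $\Delta$, $\sigma := \rho'$, and this $t$) gives $AF.(y.t).\rho'$. Combining, $(\prev AI.\rho \land \bigwedge_{p:P} AOp_p).\Delta.y$ holds and there is $t \in \Delta$ with $AF.(y.t).\rho'$, so $(\rho,\rho') \in obs_N.A$, and therefore $obs_N.C \subseteq obs_N.A$, i.e.\ $A \sref C$.

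The one genuinely delicate point — the rest is bookkeeping — is the appeal to locality in the paragraph above: that $AI.\rho.\Delta_0.y_1$ together with $\matches{\Delta_0}{y_1}{y}$ yields $AI.\rho.\Delta_0.y$. This is intuitively clear because interval predicates "only see" the stream restricted to their interval (cf.\ the remark after the definition of $\matches{\cdot}{\cdot}{\cdot}$, that it forces $\all t : \Delta_0 \st y_1.t = y.t$), but strictly it requires either a standing assumption that all interval predicates are $\dres$-local on their argument interval, or that $AI.\rho$ in particular has this property; I would state this locality assumption explicitly if it is not already in force, since without it the construction of a single abstract stream witnessing both the initialisation over $\Delta_0$ and the operation behaviour over $\Delta$ does not go through.
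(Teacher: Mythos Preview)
Your proof follows the paper's approach exactly: obtain an abstract initial stream via \refeq{eq:9}, extend it over $\Delta$ using the simulation hypothesis, then finalise via \refeq{eq:4}. Your caveat about locality of $AI.\rho$ is well-placed and in fact sharper than the paper, which silently relies on the same assumption (both in its diagram (B) and in the appendix proof of \refeq{Transitivity}, where matching streams are substituted into interval predicates without comment).
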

\begin{proof}[Proof] 
  Suppose $\sigma, \sigma' \in State_N$, $z \in Stream_Z$ and $C$ has
  an execution depicted below, where $CI$ executes in interval
  $\Delta_0$ and $\bigwedge_{p:P} COp_p$ executes in $\Delta$. Note
  that $\bigwedge_{p:P} COp_p$ may or may not terminate, and hence,
  $\Delta$ may be infinite. To prove $A \sref C$, it suffices to prove
  that there exists a matching execution of $A$ starting in $\sigma$
  and ending in $\sigma'$.
  \begin{center}
    \scalebox{0.9}{\input{soundnessa.pspdftex}}
  \end{center}
  By \refeq{eq:9}, there exists a $y_0 \in Stream_Y$ such that
  $AI.\sigma.\Delta_0.y_0$ and $ref.\Delta_0.y_0.z$ hold recalling
  that $\Delta_0$ is the initial interval of execution. This is
  depicted in (A) below.  Now, because the simulation
  $\dataref{Y}{Z}{ref}{\bigwedge_{p:P} AOp_p}{\bigwedge_{p:P} COp_p}$
  holds, there exists a $y$ that matches $y_0$ over $\Delta_0$ such
  that both $(\bigwedge_{p:P} AOp_p).\Delta.y$ and $ref.\Delta.y.z$
  hold, as depicted in (B) below.
  \begin{center}
    \begin{tabular}[t]{c@{\qquad}c}
      (A) & (B) \smallskip \\
      \scalebox{0.9}{\input{soundnessb.pspdftex}}
      & 
      \scalebox{0.9}{\input{soundnessc.pspdftex}}
    \end{tabular}
  \end{center}
  % Then because $ref$ is assumed to be prefix closed, $ref$ must hold
  % in the interval $\Delta' \in prefix.\Delta$, corresponding to
  % $\sigma'$ as depicted in (D) below. 
  Then, due to the finalisation assumption \refeq{eq:4}, there exists
  a finalisation of $A$ that results in $\sigma'$ as shown in (D)
  below.
  \begin{center}
    \begin{tabular}[t]{c@{\qquad}c}
      % (D) & 
      (D) \smallskip \\
      % \scalebox{0.9}{\input{soundnessd.pspdftex}}
      % & 
      \raisebox{1.1em}{\scalebox{0.9}{\input{soundnesse.pspdftex}}}      
    \end{tabular}
  \end{center}
\end{proof}

\subsection{Decomposing simulations}
\label{sec:decomp-simul}

A benefit of state-based forward simulation \cite{deRoever98} is the
ability to decompose proofs and focus on individual steps of the
concrete system. Proof obligation $\dataref{Y}{Z}{ref}{g}{h}$ in the
interval-based forward simulation definition
(\refdef{def:forward-sim}) takes the entire interval of execution of
the concrete and abstract systems into account. Hence, we develop a
number of methods for simplifying proofs of
$\dataref{Y}{Z}{ref}{g}{h}$\ .  Decomposing
$\dataref{Y}{Z}{ref}{g}{h}$ directly is difficult due to the
existential quantification in the consequent. However, a formula of
the form $p \imp (\exists x \st q \land r)$ holds if both $p \imp
\exists x \st q$ and $\all x \st p \land q \imp r$ hold. Hence, we
obtain the following lemma.
\begin{lemma}
  \label{lem:decomp-fs}
  For any $Y, Z \subseteq Var$ and $ref \in IntvRel_{Y,Z}$,
  $\dataref{Y}{Z}{ref}{g}{h}$ holds if both of the following hold: 
  \begin{eqnarray}
    \begin{array}[b]{@{}l@{}}
      \all z : Stream_{ Z}, \Delta, \Delta_0 :
      Intv, y_0 : Stream_{ Y} \st \\
      \qquad \Delta_0 \adjoins \Delta % \land \neg \Empty.\Delta_0
      \land ref.\Delta_0.y_0.z
      \land h.\Delta.z
    \end{array}
    &  \imp &  \exists y :  Stream_{ Y}
    \st \bmatches{\Delta_0}{y_0}{y} \land ref.\Delta.y.z 
    \label{eq:ref1}\\
    \begin{array}[b]{@{}r@{}}
      \all z : Stream_{ Z}, \Delta :
      Intv, y : Stream_{ Y} \st  \\
      ref.\Delta.y.z
      \land h.\Delta.z
    \end{array} & \imp & g.\Delta.y
    \label{eq:ref2}
  \end{eqnarray}
\end{lemma}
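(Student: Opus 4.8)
The plan is to unfold the definition of $\dataref{Y}{Z}{ref}{g}{h}$ and apply the elementary logical fact recorded immediately above the lemma statement: an implication of the shape $p \imp (\exists y \st q \land r)$ follows from the pair $p \imp (\exists y \st q)$ and $\all y \st (p \land q \imp r)$. Here we take $p$ to be the antecedent $\Delta_0 \adjoins \Delta \land ref.\Delta_0.y_0.z \land h.\Delta.z$ of the definition, $y$ the existentially quantified abstract stream, $q$ the conjunction $\bmatches{\Delta_0}{y_0}{y} \land ref.\Delta.y.z$, and $r$ the predicate $g.\Delta.y$. Then \refeq{eq:ref1} is literally the statement $p \imp (\exists y \st q)$, so it only remains to match \refeq{eq:ref2} against $\all y \st (p \land q \imp r)$.

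Concretely, I would fix arbitrary $z : Stream_Z$, $\Delta, \Delta_0 : Intv$ and $y_0 : Stream_Y$ and assume $\Delta_0 \adjoins \Delta$, $ref.\Delta_0.y_0.z$ and $h.\Delta.z$; no case analysis on $\Delta$ (empty, finite, infinite) is needed, since the hypotheses and both \refeq{eq:ref1} and \refeq{eq:ref2} already quantify over arbitrary intervals. Applying \refeq{eq:ref1} to this data produces an abstract stream $y : Stream_Y$ with $\bmatches{\Delta_0}{y_0}{y}$ and $ref.\Delta.y.z$. Instantiating the universally quantified variables of \refeq{eq:ref2} with this same $z$, $\Delta$ and $y$, its antecedent $ref.\Delta.y.z \land h.\Delta.z$ is discharged by $ref.\Delta.y.z$ (just obtained) and $h.\Delta.z$ (a standing assumption), so \refeq{eq:ref2} yields $g.\Delta.y$. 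Hence the single witness $y$ satisfies $\bmatches{\Delta_0}{y_0}{y} \land ref.\Delta.y.z \land g.\Delta.y$, which is exactly the consequent required by the definition of $\dataref{Y}{Z}{ref}{g}{h}$; since $z, \Delta, \Delta_0, y_0$ were arbitrary, $\dataref{Y}{Z}{ref}{g}{h}$ holds.

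I do not expect a genuine obstacle: the argument is a routine first-order manipulation, provided \refeq{eq:ref1} is applied before \refeq{eq:ref2} so that the \emph{same} abstract stream $y$ is threaded through both. The only point deserving a word of care is that the antecedent of \refeq{eq:ref2} is strictly weaker than the ``$p \land q$'' of the logical template --- it omits $\Delta_0 \adjoins \Delta$, $ref.\Delta_0.y_0.z$ and $\bmatches{\Delta_0}{y_0}{y}$ --- but this is harmless, since strengthening the hypothesis of a valid implication preserves validity, so \refeq{eq:ref2} still establishes $\all y \st (p \land q \imp r)$.
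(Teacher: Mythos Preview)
Your proposal is correct and follows exactly the approach the paper takes: the paper's entire justification is the sentence immediately preceding the lemma, namely that $p \imp (\exists x \st q \land r)$ follows from $p \imp \exists x \st q$ and $\all x \st p \land q \imp r$, and you have correctly instantiated this template and noted that \refeq{eq:ref2} supplies an antecedent weaker than $p \land q$, which suffices.
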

By \refeq{eq:ref1}, if the refinement predicate $ref$ holds for an
abstract stream $y_0$ in an immediately preceding interval $\Delta_0$
and the concrete system executes in the current interval $\Delta$,
then there exists an abstract stream that matches $y_0$ over
$\Delta_0$ and $ref$ holds for $y$ over $\Delta$. By \refeq{eq:ref2}
for any abstract stream $y$, concrete stream $z$ and interval
$\Delta$, if the concrete system executes in $\Delta$ and forward
simulation holds between $y$ and $z$ for $\Delta$, then the behaviour
of the abstract system holds for $\Delta$ in $y$.

To simplify representation of intervals of the form in
\refeq{eq:ref1}, we introduce the following notation.
\begin{eqnarray*}
  h \Vdash_{Y, Z} ref & \sdef & \refeq{eq:ref1}
\end{eqnarray*}
The following lemma allows one to decompose proofs of the form given
in $h \Vdash_{Y, Z} ref$.
\begin{lemma}
  \label{thm:seq-comp}
  If $Y, Z \subseteq Var$, $g,g_1, g_2 \in IntvPred_{Z}$ and $ref
  \in IntvRel_{Y, Z}$, then each of the following holds.
  \begin{align}
    \TL{Sequential composition}
    g_1 \Vdash_{Y, Z} ref  \land g_2 \Vdash_{Y,Z} ref & \ \ \imp\ \  (g_1
    \ch g_2) \Vdash_{Y, Z} ref & \text{provided $ref$ joins}
    \\
    \TL{Iteration}
    g \Vdash_{Y, Z} ref & \ \ \imp\ \   g^\omega \Vdash_{Y, Z} ref & \text{provided $ref$ joins}
    \\
    (g_2 \Vdash_{Y,Z} ref) \land (g_1 \entails g_2) &\ \ \imp\ \ g_1 \Vdash_{Y, Z} ref 
    \TL{Weaken}
    \\
    (g \Vdash_{Y,Z} ref_1)  \lor (g \Vdash_{Y,Z} ref_2)  &\ \ \imp\ \ g
    \Vdash_{Y, Z} (ref_1 \lor ref_2)  
    \TL{Disjunction}
  \end{align}
\end{lemma}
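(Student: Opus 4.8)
The plan is to prove each of the four claims of Lemma~\ref{thm:seq-comp} by unfolding the definition of $\Vdash_{Y,Z}$ (which is just \refeq{eq:ref1}) and manipulating the interval-relational structure directly, using the chop definition on interval relations and the ``joins'' hypothesis where required.

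First, for \trefeq{Sequential composition}: assume $g_1 \Vdash_{Y,Z} ref$ and $g_2 \Vdash_{Y,Z} ref$ with $ref$ joining, and take $z$, $\Delta_0 \adjoins \Delta$, $y_0$ with $ref.\Delta_0.y_0.z$ and $(g_1 \ch g_2).\Delta.z$. The chop gives either a split $\Delta = \Delta_1 \cup \Delta_2$ with $\Delta_1 \adjoins \Delta_2$, $g_1.\Delta_1.z$, $g_2.\Delta_2.z$, or $\Inf.\Delta \land g_1.\Delta.z$ (handled like the single-interval case below). In the split case I first apply $g_1 \Vdash_{Y,Z} ref$ to the prefix: since $\Delta_0 \adjoins \Delta_1$ (as $\Delta_0 \adjoins \Delta$ and $\Delta_1$ is the initial part of $\Delta$), I obtain $y_1$ with $\bmatches{\Delta_0}{y_0}{y_1}$ and $ref.\Delta_1.y_1.z$. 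Now I want to apply $g_2 \Vdash_{Y,Z} ref$ with $\Delta_1$ as the ``preceding'' interval; this needs $\Delta_1 \adjoins \Delta_2$, which holds, giving $y_2$ with $\bmatches{\Delta_1}{y_1}{y_2}$ and $ref.\Delta_2.y_2.z$. I then define the witness $y$ by gluing: $y = (\Delta_1 \dres y_1) \cup (\Delta_2 \dres y_2)$ extended arbitrarily (or, more carefully, $y \mathrel{=} y_2$ on $\Delta_1 \cup \Delta_2$ using that $y_1$ and $y_2$ agree on $\Delta_1$, plus the agreement with $y_0$ on $\Delta_0$ propagates through). Finally $ref.\Delta.y.z$ follows from $(ref \ch ref).\Delta.y.z$ together with the ``$ref$ joins'' hypothesis --- this is exactly where ``joins'' is needed, noting the link between the chop on interval relations and the $g \ch g^\omega \entails g$ formulation; one shows $ref \ch ref \entails ref$ from $ref$ joins since $ref \entails ref^\omega$ need not hold, but $ref \ch ref \entails ref \ch ref^\omega \entails ref$ via $\Empty \ch (\cdot) \equiv (\cdot)$ embedding the second copy into $ref^\omega$.

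For \trefeq{Iteration}: $g^\omega = \nu w \st (g \ch w) \lor \Empty$, so unfolding once gives $g^\omega \equiv (g \ch g^\omega) \lor \Empty$. I would show $g^\omega \Vdash_{Y,Z} ref$ by structural/fixed-point reasoning --- actually the cleanest route is to note that $\Empty \Vdash_{Y,Z} ref$ trivially (take $y = y_0$; $\bmatches{\Delta_0}{y_0}{y_0}$ holds and $ref.\emptyset.y_0.z$ holds since $\Delta = \emptyset$, provided $ref$ holds vacuously on empty intervals --- I should check or assume this, and note that if $ref$ joins then $ref$ on the empty interval follows from the empty partition), then apply \trefeq{Sequential composition} together with \trefeq{Weaken} and an approximation/coinduction argument, or more directly observe $g^\omega \entails \bigvee_n g^n$ style decomposition is awkward with the greatest fixpoint, so instead I would prove $g \Vdash_{Y,Z} ref \imp (g \ch g^\omega) \Vdash_{Y,Z} ref$ from \trefeq{Sequential composition} applied to $g$ and $g^\omega$, assuming inductively $g^\omega \Vdash_{Y,Z} ref$ — but since this is a greatest fixpoint the honest argument is: define $S = \{w \mid w \Vdash_{Y,Z} ref\}$ informally and show $g^\omega$ lies in the downward closure; cleanest is to establish that the set of $w$ with $w \Vdash_{Y,Z} ref$ is closed under the relevant operations and contains $\Empty$ and $g$, hence contains $g \ch g^\omega \lor \Empty = g^\omega$ by the fixpoint equation once we know both disjuncts satisfy $\Vdash$. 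This circularity is only apparent because $g^\omega$ already satisfies its own unfolding; the rigorous version uses that $\Vdash_{Y,Z} ref$ is preserved by $\entails$-limits, but I expect the paper handles it by the unfolding identity plus \trefeq{Sequential composition}.

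The \trefeq{Weaken} case is immediate: if $g_1 \entails g_2$ then $h.\Delta.z$ with $h = g_1$ implies $g_2.\Delta.z$, so the hypothesis of $g_2 \Vdash_{Y,Z} ref$ fires and yields the same $y$. The \trefeq{Disjunction} case is also routine: given $g \Vdash_{Y,Z} ref_1$ (the other disjunct symmetric), and hypotheses $ref_1 \lor ref_2$ holding on $\Delta_0$ and $g$ on $\Delta$ — wait, here I must be careful: $\Vdash_{Y,Z} (ref_1 \lor ref_2)$ has $(ref_1 \lor ref_2).\Delta_0.y_0.z$ in its antecedent, which gives $ref_1.\Delta_0.y_0.z$ or $ref_2.\Delta_0.y_0.z$, but I only know $g \Vdash ref_1$, so if it is the $ref_2$ case I am stuck --- so the correct reading is that the disjunction on the left of the implication is a \emph{meta} disjunction of which hypothesis holds, not a case split inside the antecedent; assuming $g \Vdash_{Y,Z} ref_1$ holds (globally), I take antecedent $(ref_1 \lor ref_2).\Delta_0.y_0.z \land g.\Delta.z$; from $ref_1 \Rightarrow ref_1 \lor ref_2$ I cannot go backwards, so actually I need the antecedent to give me $ref_1$ — hmm. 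I will trust the paper's intended statement: the safe proof is that $ref_1.\Delta_0 \cdots \lor ref_2.\Delta_0\cdots$ in the antecedent, combined with knowing \emph{one} of $g \Vdash ref_i$, does not suffice, so the lemma must intend $(g\Vdash ref_1)$ gives $(g \Vdash ref_1 \lor ref_2)$ via weakening of the \emph{consequent's} $ref$ and the antecedent: indeed if we only ever get $ref_1$ in the antecedent we produce $ref_1.\Delta.y.z$ hence $(ref_1\lor ref_2).\Delta.y.z$, and if we get $ref_2$ in the antecedent we have no obligation met — so the genuinely safe statement uses that $ref_1 \lor ref_2$ in the antecedent is weaker, meaning the proof obligation is \emph{harder}, and the lemma as I read it would require both; I will present it as: from $g \Vdash ref_1$, weaken the consequent ($ref_1.\Delta.y.z \entails (ref_1\lor ref_2).\Delta.y.z$) but \emph{also} note the antecedent $ref_1\lor ref_2$ is used only positively in reaching the consequent through $g \Vdash ref_1$ which needs $ref_1$ — so in fact I would prove it by observing the antecedent disjunction distributes and each disjunct is dispatched by the correspondingly-indexed hypothesis, i.e., the honest lemma needs both $g\Vdash ref_1$ \emph{and} $g \Vdash ref_2$; given the paper writes $\lor$, I will follow its convention and note the consequent-weakening step, flagging that the main subtlety across all four parts is the ``joins'' manipulation in \trefeq{Sequential composition} (converting $ref \ch ref$ to $ref$), which is the real obstacle, the rest being bookkeeping with domain restriction and the stream-gluing witness construction.
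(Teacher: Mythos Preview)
Your treatment of \trefeq{Sequential composition} and \trefeq{Weaken} is correct and coincides with the paper: for the former, unfold the chop, apply $g_1 \Vdash_{Y,Z} ref$ on the prefix (using $\Delta_0 \adjoins \Delta_1$), then $g_2 \Vdash_{Y,Z} ref$ on the suffix (using $\Delta_1 \adjoins \Delta_2$), glue the two abstract witnesses into a single $y$, and finally collapse $(ref \ch ref)$ to $ref$ via the ``$ref$ joins'' hypothesis --- exactly the paper's calculation. For \trefeq{Iteration} the paper does not engage with any of the fixed-point subtleties you raise; it simply says the result ``follows by unfolding $^\omega$ and has a similar structure to \trefeq{Sequential composition}'', so your coinductive detour, while not unreasonable to worry about, goes well beyond what the paper attempts.

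On \trefeq{Disjunction} you have correctly put your finger on a genuine problem. The paper's own proof distributes the antecedent into $(ref_1.\Delta_0.y_0.z \land g.\Delta.z) \lor (ref_2.\Delta_0.y_0.z \land g.\Delta.z)$ and then, in the step justified as ``assumption $(g \Vdash_{Y,Z} ref_1) \lor (g \Vdash_{Y,Z} ref_2)$, logic'', discharges the first disjunct with $g \Vdash ref_1$ and the second with $g \Vdash ref_2$ --- that is, it tacitly uses \emph{both} hypotheses, precisely as you suspected. With only one of them (say $g \Vdash ref_1$), the $ref_2$-case of the antecedent cannot be handled in general: take $ref_1$ identically false, so that $g \Vdash ref_1$ holds vacuously, and a $ref_2$ that holds on some $\Delta_0$ but fails on every $\Delta$ adjoining it; then the conclusion $g \Vdash (ref_1 \lor ref_2)$ fails. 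So your verdict that ``the honest lemma needs both $g \Vdash ref_1$ and $g \Vdash ref_2$'' is right. Present the distribute-and-dispatch argument (this is exactly what the paper writes down), and drop the ``consequent-weakening'' attempt at the end of your sketch, which cannot rescue the $\lor$-hypothesis form.
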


% \begin{corollary}
%   Suppose $Y, Z \subseteq Var$ such that $Y \cap Z =
%   \emptyset$, $g_1, g_2 \in IntvPred_{Z}$ and $ref \in IntvPred_{Y \cup
%     Z}$ such that $ref$ joins. Then: 
%   \begin{eqnarray*}
%     (g_1 \Vdash_{Y, Z} ref) \land (g_2 \Vdash_{Y, Z} ref) & \imp &
%     (g_1 \ch g_2) \Vdash_{Y, Z} ref 
%     \\
%     (g \Vdash_{Y, Z} ref) & \imp &
%     (g^\omega \Vdash_{Y, Z} ref)
%   \end{eqnarray*}
% \end{corollary}
Note that $ref$ can neither be weakened nor strengthened in the
trivial manner because it appears in both the antecendent and
consequent of the implication. If a refinement relation operates on two disjoint portions of the
stream, it is possible to split the refinement as follows:
\begin{lemma}[Disjointness]
  \label{thm:disj}
  Suppose $p \in Proc$, $W, X, Y, Z \subseteq Var$ such that $Y \cap Z
  = \emptyset$, $W \cup X = Y$ and $W \cap X = \emptyset$. If $g_1,
  g_2 \in IntvPred_{Z}$, $ref_W \in IntvRel_{W, Z}$, $ref_X \in
  IntvRel_{X, Z}$, and $\star \in \{\land, \lor\}$, then
  \begin{align}
    (g_1 \Vdash_{W,Z} ref_W) \land (g_2 \Vdash_{X,Z} ref_X) &\ \ \imp\
    \ (g_1 \land g_2) \Vdash_{Y, Z} (ref_W \star ref_X)
    \TL{Disjointness}
  \end{align}
\end{lemma}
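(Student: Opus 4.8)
The plan is to unfold $\Vdash$ according to its definition \eqref{eq:ref1} and to construct the witness abstract stream demanded by $(g_1 \land g_2) \Vdash_{Y,Z} (ref_W \star ref_X)$ by \emph{gluing} together witnesses supplied by the two sub-hypotheses $g_1 \Vdash_{W,Z} ref_W$ and $g_2 \Vdash_{X,Z} ref_X$. First I would pin down the reading of $ref_W \star ref_X$ as an element of $IntvRel_{Y,Z}$: since $W \cup X = Y$ and $W \cap X = \emptyset$, every $y \in Stream_Y$ restricts pointwise to streams $y|_W \in Stream_W$ and $y|_X \in Stream_X$, and conversely a $W$-stream and an $X$-stream glue back into a unique $Y$-stream; accordingly $(ref_W \star ref_X).\Delta.y.z \sdef (ref_W.\Delta.(y|_W).z) \star (ref_X.\Delta.(y|_X).z)$.

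Next I would fix $z \in Stream_Z$, intervals $\Delta_0 \adjoins \Delta$, and $y_0 \in Stream_Y$, and assume both $(ref_W \star ref_X).\Delta_0.y_0.z$ and $(g_1 \land g_2).\Delta.z$; the goal is to exhibit $y \in Stream_Y$ with $\bmatches{\Delta_0}{y_0}{y}$ and $(ref_W \star ref_X).\Delta.y.z$. Consider first $\star = \land$. Then $ref_W.\Delta_0.(y_0|_W).z$ and $ref_X.\Delta_0.(y_0|_X).z$ both hold, while $g_1.\Delta.z$ and $g_2.\Delta.z$ follow from the conjunct $(g_1 \land g_2).\Delta.z$. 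Applying $g_1 \Vdash_{W,Z} ref_W$---whose three antecedents $\Delta_0 \adjoins \Delta$, $ref_W.\Delta_0.(y_0|_W).z$ and $g_1.\Delta.z$ are all in hand---produces $y_W \in Stream_W$ with $\bmatches{\Delta_0}{y_0|_W}{y_W}$ and $ref_W.\Delta.y_W.z$; symmetrically $g_2 \Vdash_{X,Z} ref_X$ produces $y_X \in Stream_X$ with $\bmatches{\Delta_0}{y_0|_X}{y_X}$ and $ref_X.\Delta.y_X.z$. Define $y \in Stream_Y$ by letting $y.t$ agree with $y_W.t$ on $W$ and with $y_X.t$ on $X$ for every $t$---a total, unambiguous definition precisely because $\{W,X\}$ partitions $Y$---so that $y|_W = y_W$ and $y|_X = y_X$. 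Then $ref_W.\Delta.(y|_W).z$ and $ref_X.\Delta.(y|_X).z$ together give $(ref_W \land ref_X).\Delta.y.z$, and for $t \in \Delta_0$ we have $y.t|_W = y_W.t = y_0.t|_W$ and $y.t|_X = y_X.t = y_0.t|_X$, hence $y.t = y_0.t$, i.e.\ $\bmatches{\Delta_0}{y_0}{y}$.

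For $\star = \lor$ the reasoning is identical but shorter: one disjunct of $(ref_W \lor ref_X).\Delta_0.y_0.z$ holds, say $ref_W.\Delta_0.(y_0|_W).z$; applying $g_1 \Vdash_{W,Z} ref_W$ (using $g_1.\Delta.z$) produces $y_W$ as before, and one takes $y.t$ to agree with $y_W.t$ on $W$ and with $y_0.t|_X$ on $X$, so that $y|_W = y_W$ and $y|_X$ matches $y_0|_X$ over $\Delta_0$; then $ref_W.\Delta.(y|_W).z$ supplies the left disjunct of $(ref_W \lor ref_X).\Delta.y.z$ and $\bmatches{\Delta_0}{y_0}{y}$ follows exactly as in the previous case, the subcase with the $X$-disjunct being symmetric. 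The one piece of genuine content is this gluing construction; the point that requires care is that the two independently produced witnesses must be \emph{jointly} compatible with $y_0$ over $\Delta_0$, which holds automatically here because $\{W,X\}$ is a partition of $Y$, leaving no shared region on which they could disagree---it is precisely this partition hypothesis, not $p \in Proc$ or $Y \cap Z = \emptyset$ (which are not used), that drives the proof.
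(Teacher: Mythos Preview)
Your proof is correct and follows essentially the same gluing construction as the paper: decompose $y_0$ along the partition $\{W,X\}$, invoke the two hypotheses to obtain witnesses on each piece, and reassemble into a $Y$-stream. The only cosmetic difference is that the paper treats both values of $\star$ uniformly via the distributivity of $\land$ over $\star$, whereas you split into the $\land$ and $\lor$ cases explicitly; your observation that $p \in Proc$ and $Y \cap Z = \emptyset$ are unused is also borne out by the paper's own argument.
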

\noindent Disjointness allows one to prove mixed refinement, where the
system states are split into disjoint subsets and different refinement
relations are used to verify refinement between these substates.

Proof obligation \refeq{eq:ref2} may also be simplified.  In
particular, for interval predicate $g$, interval $\Delta$ and streams
$y$ and $z$, we define $(g \project 1).\Delta.y.z \sdef g.\Delta.y $
and $(g \project 2).\Delta.y.z \sdef g.\Delta.z$, which allows one to
shorten \refeq{eq:ref2} to
\begin{eqnarray}
  \label{eq:1}
  ref \land (h \project 2) & \entails & (g \project 1)
\end{eqnarray}
Hence, proofs of refinement are reduced to proofs of implication
between the concrete and abstract state spaces. There are numerous
rules for decomposing proofs of the form in \refeq{eq:1} that exploit
rely/guarantee-style reasoning \cite{DH12iFM,DDH12}. % These can be
% applied to prove \refeq{eq:1}.

\section{Fine-grained atomicity}
\label{sec:fine-grain-atom}
Interval-based reasoning provides the opportunity to incorporate
methods for non-deterministi\-cal\-ly evaluating expressions
\cite{BH10,HBDJ13}, which captures the possible low-level
interleavings (e.g., \reffig{fig:split}) at a higher-level of
abstraction. Methods for non-deterministically evaluating expressions
are given in \refsec{sec:appar-stat-eval}, and also appear in
\cite{BH10,HBDJ13,DH12,DH12MPC,DH12iFM}. Verification of data
refinement of our running example that combines non-deterministic
evaluation from \refsec{sec:appar-stat-eval} and the data refinement
rules from \refsec{sec:gener-theory-refin} is given in
\refsec{sec:non-atomic-guard}.

\subsection{Non-deterministically evaluating expressions}
\label{sec:appar-stat-eval}

Most hardware can only guarantee that at most one global variable can
be read in a single atomic step. Thus, in the presence of possibly
interfering processes and fine-grained atomicity, a model that assumes
expressions containing multiple shared variables can be evaluated in a
single state may not be implementable without the introduction of
contention inducing locks \cite{AL95,BvW99,JP08}. As we have done in
\reffig{fig:split}, one may split expression evaluation into a number
of atomic steps to make the underlying atomicity explicit. However,
this approach is undesirable as it causes the complexity of expression
evaluation to increase exponentially with the number of variables in
an expression --- evaluation of an expression with $n$ (global)
variables would require one to check $n!$ permutations of the read
order.

Interval-based reasoning enables one to incorporate methods for
non-deterministically evaluating state predicates over an evaluation
interval \cite{HBDJ13}, which allow the possible permutations in the
read order of variables to be considered at a high level of
abstraction. For this paper, we use \emph{apparent states evaluators},
which allow one to evaluate an expression $e$ with respect to the set
of states that are apparent to a process. Each variable of $e$ is
assumed to be read at most once, but at potentially different
instants, and hence, instead of evaluating $e$ in a single atomic
step, apparent states evaluations assume expression evaluation takes
time and considers the set of states that occur over the interval of
evaluation. An apparent state is generated by picking a value for each
variable from the set of actual values of the variable over the
interval of evaluation. For $\Delta \in Intv$ and $s \in Stream_{V}$,
we define:\smallskip

\hfill$\begin{array}{rcl}
  apparent.\Delta.s & \sdef & 
  \{\sigma : State_V \mid 
  \forall v : V \st \exists t : \Delta \st \sigma.v = s.t.v \} 
\end{array}$\hfill{}

\begin{example}
  \label{ex:apparent}
  Consider the statements $u \asgn 1 \ch v \asgn 1$ which we assume
  are executed over an interval $\Delta$ from an initial state that
  satisfies $u, v = 0, 0$. The set of states that actually occur over
  this interval is hence\smallskip

  \hfill $\{\{ u \mapsto 0, v \mapsto 0 \}, \{
  u \mapsto 1, v \mapsto 0 \}, \{ u \mapsto 1, v \mapsto 1 \} \}$
  \hfill \smallskip 

  \noindent Evaluation of $u < v$ in the set of actual states above
  always results in $false$.  Assuming no other (parallel)
  modifications to $u$ and $v$, for some stream $s$ over $\{u,v\}$,
  the set of apparent states corresponding to $\Delta$ is: \smallskip
  
  \hfill$\begin{array}{rcl}
    apparent.\Delta.s
    & = &   \left\{\begin{array}{@{}l@{}}
        \{ u \mapsto 0, v \mapsto 0 \},
        \{ u \mapsto 1, v \mapsto 1 \}, 
        \{ u \mapsto 0, v \mapsto 1 \},
        \{ u \mapsto 1, v \mapsto 0 \} 
      \end{array}\right\}
  \end{array}$ \hfill \smallskip
  
  \noindent where the additional apparent state $ \{ u \mapsto 0, v
  \mapsto 1 \}$ may be obtained by reading $u$ with value $0$ (in the
  initial state) and $v$ with value $1$ (after both
  modifications). Unlike the actual states evaluation, $u < v$ may
  result in $false$ when evaluating in the apparent states. Note that
  ${v = v}$ still only has one possible value, $true$, i.e., apparent
  states evaluation assumes that the same value of $v$ is used for
  both occurrences of $v$.
\end{example}

Two useful operators for a sets of apparent states evaluation allow
one to formalise that $c$ \emph{definitely} holds (denoted $\Def c$)
and $c$ \emph{possibly} holds (denoted $\Pos c$), which are defined as
follows.
\begin{align*}
  (\Def c).\Delta.s \ \   \sdef \ \  &     \all \sigma :
  apparent.\Delta.s \st  c.\sigma & 
  (\Pos c).\Delta.s \ \   \sdef \ \  &    \exists \sigma :
  apparent.\Delta.s \st  c.\sigma
\end{align*}
The following lemma states a relationship between \emph{definitely}
and \emph{always} properties, as well as between \emph{possibly} and
\emph{sometime} properties \cite{HBDJ13}.  Note that both $\Def c
\entails \Always c$ and $\Eventually c \entails \Pos c$ hold, but the
converse of both properties are not necessarily true.

\begin{example} 
  We now instantiate the guard evaluations of the form $\grd{c}$
  within \refeq{eq:55} and \refeq{eq:66}. In particular, a guard $c$
  holds if it is possible to evaluate the variables of $c$ (at
  potentially different instants) so that $c$ evaluates to
  $true$. Therefore, the semantics of the evaluation of a guard $c$ is
  formalised by $\Pos c$ and we obtain the following interval
  predicates for \refeq{eq:55} and \refeq{eq:66}. 
  \begin{eqnarray}
      (\Pos grd \ch \NE{\Always(m =  1)} \lor \Pos \neg grd \ch
      \NE{\Always(m =  2)})
      & \land &
      (\Pos b \ch \underline{\Always grd} \lor \Pos \neg b 
      )
      \label{eq:5}
      \\
      \left(
        \begin{array}[c]{@{}l@{}}
          \Pos (u < v) \ch \NE{\Always(m =  1)} \lor \\
          \Pos (u \geq v) \ch
            \NE{\Always(m =  2)}
          \end{array}
        \right)
        & \land &
        \left(
          \begin{array}[c]{@{}l@{}}
            \Pos (0 < u) \ch \underline{\Always (v = \infty)} \lor \\
            \Pos (0 \geq u) \ch
            \underline{\Always (v = -\infty)}
          \end{array}
        \right)
    \label{eq:6}
  \end{eqnarray}
  Note that interval predicate $\Pos (u < v)$ is equivalent
  to\smallskip

  \hfill$\exists k_u, k_v
  \st ((\Eventually (k_u = u) \ch \Eventually(k_v = v)) \lor
  (\Eventually(k_v = v) \ch \Eventually(k_u = u))) \ch (k_u < k_v)$
  \hfill \smallskip

  \noindent Hence, the formalisation in \refeq{eq:6} accurately
  captures the fine-grained behaviour of \reffig{fig:conc-ref-atom}
  without having to explicitly decompose the guard evaluation at
  $cp_1$ into individual reads as done in \reffig{fig:split}.
\end{example}

The theory in \cite{HBDJ13} allows one to relate different forms of
non-deterministic evaluation. For example, both $\Def c \entails
\Always c$ and $\Eventually c \entails \Pos c$ hold. To strengthen the
implication to an equivalence, one must introduce additional
assumptions about the stability of the variables of $c$. Because
adjoining intervals are disjoint, the definition of stability must
refer to the value of $c$ at the end of an immediately preceding
interval \cite{DH12MPC,DDH12,DH12iFM}. For a state predicate $c$,
interval $\Delta$ and stream $s$, we define \smallskip

\hfill$\begin{array}[t]{rcl}
prev.c.\Delta.s & \sdef & \exists
\Delta' : Intv \st \Delta' \adjoins \Delta \land \NE{\Always c}.\Delta'.s\label{eq:11}
\end{array}$\hfill\smallskip

Variable $v$ is stable over a $\Delta$ in $s$ (denoted
$\mathsf{stable}.v.\Delta.s$) iff the value of $v$ does not change
from its value over some interval that immediately precedes $\Delta$. A
set of variables $V$ is \emph{stable} in $\Delta$ (denoted
$\mathsf{stable}.V.\Delta$) iff each variable in $V$ is stable in
$\Delta$. Thus, we define: 
$$
\begin{array}[t]{rclrcl}
  \mathsf{stable}.v.\Delta.s & \sdef & 
  \exists k : Val \st (prev.(v = k) \land \Always (v = k)).\Delta.s
  & \qquad
  \qquad \mathsf{stable}.V.\Delta & \sdef & \all v : V \st \mathsf{stable}.v.\Delta
\end{array}
$$
Note that every variable is stable in an empty interval and the empty
set of variables is stable in any interval, i.e., both
$\mathsf{stable}.V.\emptyset$ and $\mathsf{stable}.\emptyset.\Delta$ hold trivially.

We let $vars.c$ denote the free variables of state predicate $c$.  The
following lemma states that if all but one variable of $c$ is stable
over an interval $\Delta$, then $c$ definitely holds in $\Delta$ iff
$c$ always holds in $\Delta$, and that $c$ possibly holds in $\Delta$
iff $c$ holds sometime in $\Delta$ \cite{HBDJ13}.
\begin{lemma}
  \label{lem:st-grd-var}
  For a state predicate $c$ and variable $v$, 
  $\mathsf{stable}.(vars.c
  \bs \{v\}) \entails (\Def c = \Always c) \land (\Pos c = \Eventually
  c)$.
\end{lemma}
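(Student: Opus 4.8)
The plan is to fix a state predicate $c$ and a distinguished variable $v$, assume $\mathsf{stable}.(vars.c \bs \{v\}).\Delta.s$, and prove the two equivalences separately. Since the paper already notes that $\Def c \entails \Always c$ and $\Eventually c \entails \Pos c$ hold unconditionally, only the reverse implications $\Always c \entails \Def c$ and $\Pos c \entails \Eventually c$ need the stability hypothesis, and in fact it suffices to prove one of them: the two are contrapositive-dual under negation of $c$ (noting $vars.(\neg c) = vars.c$, $\neg \Def c = \Pos \neg c$, and $\neg \Always c = \Eventually \neg c$), so I would establish $\Pos c \entails \Eventually c$ and then derive $\Always c \entails \Def c$ by instantiating at $\neg c$. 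I would also dispose of the empty-interval case first: if $\Delta = \emptyset$ then $apparent.\Delta.s = \emptyset$, so $\Pos c$ is vacuously false and the implication holds trivially; henceforth assume $\Delta \neq \emptyset$.

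The core argument is about the apparent-states set when all but one variable is pinned. Suppose $(\Pos c).\Delta.s$ holds, i.e.\ there is some $\sigma \in apparent.\Delta.s$ with $c.\sigma$. By definition of $apparent$, for each $w \in vars.c$ there is a witness time $t_w \in \Delta$ with $\sigma.w = s.t_w.w$ (values of $\sigma$ on variables outside $vars.c$ are irrelevant to $c.\sigma$). Now invoke stability: for each $w \in vars.c \bs \{v\}$, $\mathsf{stable}.w.\Delta.s$ gives a constant $k_w$ with $(prev.(w = k_w) \land \Always(w = k_w)).\Delta.s$, so $s.t.w = k_w$ for every $t \in \Delta$; in particular $\sigma.w = s.t_w.w = k_w = s.t.w$ for all $t \in \Delta$. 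Thus the only ``free'' coordinate is $v$, with $\sigma.v = s.t_v.v$ for the single time $t_v \in \Delta$. Consider the actual state $s.t_v$: it agrees with $\sigma$ on $v$ (by choice of $t_v$) and on every other variable of $c$ (both equal the stable constant $k_w$). Hence $s.t_v$ and $\sigma$ agree on all of $vars.c$, so $c.(s.t_v)$ holds, which is exactly $(\Eventually c).\Delta.s$ since $t_v \in \Delta$. This completes $\Pos c \entails \Eventually c$, and the $\Always c \entails \Def c$ direction follows by the duality noted above.

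The main obstacle — such as it is — is bookkeeping rather than depth: one must be careful that $prev$ gives stability relative to a \emph{preceding} interval but $\Always(w = k_w).\Delta.s$ already pins $w$ throughout $\Delta$ itself, which is the only part needed here (the $prev$ conjunct in the definition of $\mathsf{stable}$ is what would be used in the equivalence-strengthening results cited from \cite{HBDJ13}, but for this lemma $\Always$ alone carries the argument). A second minor point is handling variables of $c$ not in the domain of the ``relevant'' state: since a state has type $V \fun Val$ and $vars.c \subseteq V$ is implicit, $c.\sigma$ depends only on $\sigma$ restricted to $vars.c$, so constructing the witness state $s.t_v$ and checking it satisfies $c$ is immediate once agreement on $vars.c$ is established. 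I expect no genuine difficulty beyond making these definitional unfoldings precise.
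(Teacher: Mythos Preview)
Your proposal is correct. The paper itself does not supply a proof of this lemma but cites \cite{HBDJ13}; the appendix only treats Lemmas~\ref{thm:dataref}, \ref{thm:seq-comp}, and \ref{thm:disj}. Your argument is the natural one and is sound: pin every variable in $vars.c \setminus \{v\}$ to its stable value $k_w$ on $\Delta$, observe that any apparent state $\sigma$ therefore agrees with the actual state $s.t_v$ on all of $vars.c$ (where $t_v$ is the witness time for $v$), and conclude $c.(s.t_v)$ from $c.\sigma$. The duality reduction ($\Always c \entails \Def c$ from $\Pos(\neg c) \entails \Eventually(\neg c)$) and the empty-interval bookkeeping are both handled correctly. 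Your remark that only the $\Always(w = k_w)$ conjunct of stability is needed here, not the $prev$ part, is also accurate.
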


\begin{example}
  For our running example, by \reflem{lem:st-grd-var}, it is possible
  to simplify \refeq{eq:5} and \refeq{eq:6} and replace each
  occurrence of `$\Pos$' by `$\Eventually$' as follows:
  \begin{align}
    (\Eventually grd \ch  \NE{\Always(m =  1)}\lor \Eventually \neg grd \ch
    \NE{\Always(m =  2)}) \ \ \land\ \ & (\Eventually b \ch \underline{\Always grd}
    \lor \Eventually \neg b 
    )
    \TL{Abs-IP}
    \\
    \left(
    \begin{array}[c]{@{}l@{}}
      \Eventually (u < v) \ch \NE{\Always(m =  1)} \lor \\
      \Eventually (u \geq v)
      \ch \NE{\Always(m =  2)}
    \end{array}
  \right) \ \ \land\ \ & \left(
    \begin{array}[c]{@{}l@{}}
      \Eventually (0 < u) \ch
      \underline{\Always (v = \infty)} \lor \\
      \Eventually (0 \geq u) \ch
      \underline{\Always (v = -\infty)}
    \end{array}
    \right) \TL{Conc-IP}
  \end{align}
\end{example}

\subsection{Data refinement example}
\label{sec:non-atomic-guard}

% The programs in Figures \ref{fig:abs-ref-atom} and
% \ref{fig:conc-ref-atom} are incomplete, hence we ignore the proof
% for finalisation.
We assume the representation variables of the abstract and concrete
programs are given by $Y \subseteq Var$ and $Z \subseteq Var$,
respectively and prove forward simulation using $\NE{\Always uv}$
(recalling that relation $uv$ is defined in \refsec{sec:background}),
which requires that we prove 
\begin{eqnarray}
  \dataref{Y}{Z}{\NE{\Always uv}}{\trefeq{Abs-IP}}{\trefeq{Conc-IP}} 
  \label{eq:13}
\end{eqnarray}
and both of the following: 
\begin{eqnarray}
  \begin{array}[b]{@{}r@{}}
    \all \Delta : Intv, z : Stream_Z, \sigma : State_N \st  \\
    CInit.\sigma.\Delta.z
  \end{array}
  & \imp &
  \exists y : Stream_Y \st AInit.\sigma.\Delta.y \land \NE{\Always uv}.\Delta.y.z
  \label{eq:12}
  \\
  \label{eq:21}
  \begin{array}[b]{@{}l@{}}
    \all z : Stream_{ Z}, y : Stream_Y, \Delta :
    Intv , \sigma :  State_N \st \all t : \Delta \st  \\
    \hfill 
    \NE{\Always uv}.\Delta.y.z
    \land CFin.(z.t).\sigma
  \end{array}
  & \imp & 
  AFin.(y.t).\sigma  
\end{eqnarray}
The proofs of \refeq{eq:12} and \refeq{eq:21} are trivial. To prove
\refeq{eq:13}, we use \reflem{lem:decomp-fs}, which requires that we
show that both of the following hold. Recall that $uv$ is the state
relation defined in \refsec{sec:interv-based-reas}.
\begin{eqnarray}
  \label{eq:16}
  & \trefeq{Conc-IP} \Vdash_{Y,Z} \NE{\Always uv} \\
  \label{eq:17}
  & \NE{\Always uv} \land \trefeq{Conc-IP}\project 2 \entails
  \trefeq{Abs-IP} \project 1
\end{eqnarray}
The proof of \refeq{eq:16} is trivial. Expanding the definitions of
\trefeq{Abs-IP} and \trefeq{Conc-IP}, then applying some straightforward
propositional logic, \refeq{eq:17}, holds if both of the following
hold.
\begin{eqnarray}
  \NE{\Always uv} \land  \left(
    \begin{array}[c]{@{}l@{}}
      \Eventually (0 < u) \ch \underline{\Always (v = \infty)} \lor\\
      \Eventually (0 \geq u) \ch \underline{\Always (v = -\infty)}
    \end{array}
  \right) 
  \project 2 
  & \entails & 
  (\Eventually b \ch \underline{\Always grd} \lor \Eventually \neg b 
  ) 
  \project 1 
  \label{eq:7}
  \\
  \NE{\Always uv} \land \left(
    \begin{array}[c]{@{}l@{}}
      \Eventually (u < v) \ch \NE{\Always(m =  1)} \lor \\
      \Eventually (u \geq v)
      \ch \NE{\Always(m =  2)}
    \end{array}
  \right)  
  \project 2 
  & \entails & 
  \left(
    \begin{array}[c]{@{}l@{}}
      \Eventually grd \ch \NE{\Always(m =  1)} \lor \\
      \Eventually \neg grd
      \ch \NE{\Always(m =  2)}
    \end{array}
  \right)\project 1 
  \label{eq:3}
\end{eqnarray}
Condition \refeq{eq:7} is proved in a straightforward manner as
follows and uses the fact that $\Always(u < \infty)$ holds throughout
the execution of \reffig{fig:conc-ref-atom}.
\begin{derivation}
  \step{ \NE{\Always uv} \land ( \Eventually (0 < u) \ch \underline{\Always
      (v = \infty)} \lor \Eventually (0 \geq u) \ch \underline{\Always
      (v = -\infty)} ) \project 2}

  \trans{\entails}{distribute projection, logic and $\Always(u < \infty)$}

  \step{\Always uv \land ((\Eventually (0 < u) \project 2\ch \underline{\Always
      (u < v)} \project 2) \lor (\Eventually (0 \geq u)\project 2 \ch \underline{\Always
      (u \geq v)}  \project 2))}

  \trans{\entails}{distribute $\land$, $\Always uv$ splits}

  \step{( \Always uv \land \Eventually (0 < u) \project 2)) \ch 
    (\Always uv \land \underline{\Always 
      (u < v)} \project 2) \lor }
  \step{( \Always uv \land \Eventually (0 \geq u) \project 2) \ch (
    \Always uv \land\underline{\Always (u \geq v)}  \project 2)}

  \trans{\entails}{use $\Always uv$}

  \step{(\Eventually b \project 1 \ch 
    \underline{\Always 
       grd} \project 1) \lor (\Eventually \neg b) \project 1
   }

  \trans{\equiv}{distribute projection}

  \step{((\Eventually b \ch 
    \underline{\Always 
       grd}) \lor \Eventually \neg b
     )  \project 1}

\end{derivation}
The proof of \refeq{eq:3} has a similar structure, and hence, its
details are elided. 

The example verification demonstrates many of the benefits of using
interval-based reasoning to prove data refinement between concurrent
systems.  The proofs themselves are succinct (and consequently more
understandable) because the reasoning is performed at a high level of
abstraction. Expression evaluation is assumed to take time and
evaluation operators such as `$\Eventually$' and `$\Pos$' are used to
capture the inherent non-determinism that results from concurrent
executions during the interval of evaluation. Furthermore, the
translation of the program in \reffig{fig:conc-ref-atom} to the
lower-level program \reffig{fig:split} that makes the non-determinism
for evaluating reads explicit is not necessary. Instead, one is able
to provide a semantics for the program in \reffig{fig:conc-ref-atom}
directly. Finally, unlike a state-based forward simulation proof,
which requires that a verifier explicitly decides which of the
concrete steps are non-stuttering, then find a corresponding abstract
step for each non-stuttering step, interval-based reasoning allows one
to remove this analysis step altogether.

\section{Conclusions}

Interval-based frameworks are effective for reasoning about
fine-grained atomicity and true concurrency in the presence of both
discrete and continuous properties.  The main contribution of this
paper is the development of generalised methods for proving data
refinement using interval-based reasoning.  A simulation rule for
proving data refinement is developed and soundness of the rule with
respect to the data refinement definition is proved. Our simulation
rule allows the use of refinement relations between streams over two
state spaces within an interval, generalising traditional refinement
relations, which only relate two states. Using interval-based
reasoning enables one to incorporate methods for non-deterministically
evaluating expressions, which in combination with our simulation rules
are used to verify data refinement of a simple concurrent
program. 

Over the years, numerous theories for data refinement have been
developed. As far as we are aware, two of these are based on
interval-based principles similar to ours. A framework that combines
interval temporal logic and refinement has been defined by B{\"a}umler
et al \cite{BSTR11}, but their execution model explicitly interleaves
a component and its environment. As a result, our high-level
expression evaluation operators cannot be easily incorporated into
their framework. Furthermore, refinement is defined in terms of
relations between the abstract and concrete states. Broy presents
refinement between streams of different types of timed systems (e.g.,
discrete vs. continuous systems) \cite{Bro01}; however, these methods
do not consider interval-based reasoning. An interesting direction of
future work would be to consider a model that combines our methods
with theories for refinement between different abstractions of time.

\smallskip

\noindent
\textbf{Acknowledgements} This work is sponsored by EPSRC Grant
EP/J003727/1. We thank our anonymous reviewers for their numerous
insightful comments. In particular, one reviewer who pointed out a
critical flaw in one of our lemmas.

\bibliographystyle{eptcs}
\bibliography{thesis,jreferences}

 % \newpage
\appendix

\section{Proofs of lemmas}

\noindent {\bf Lemma} 
  \refeq{thm:dataref} 
  Provided that $id.\sigma.\tau \sdef \sigma = \tau$.
  \begin{align}
    \dataref{X}{X}{\Always id}{g}{g} \TL{Reflexivity}
    \\
    \dataref{X}{Y}{ref_1}{f}{g} \land  \dataref{Y}{Z}{ref_2}{g}{h} 
    \ \ \imp \ \ & \dataref{X}{Z}{(ref_1 \circ
    ref_2)}{f}{h} \TL{Transitivity}
  \end{align}
\begin{proof}
  The proof of \refeq{Reflexivity} is trivial. We prove
  \refeq{Transitivity} as follows, where we assume that $\Delta_0,
  \Delta \in Intv$ such that $\Delta_0 \adjoins \Delta$, $x_0 \in
  Stream_{X}$ and $z \in Stream_{Z}$ are arbitrarily
  chosen. We have:
  \begin{derivation}
    \step{(ref_1 \circ ref_2).\Delta_0.x_0.z \land h.\Delta.z}
    \trans{=}{definition of $\circ$ and logic}
    \step{\exists y_0: Stream_{ Y} \st ref_1.\Delta_0.x_0.y_0 \land ref_2.\Delta_0.y_0.z \land h.\Delta.z}
  \end{derivation}
  Hence, for an arbitrarily chosen $y_0 \in Stream_{ Y}$, we prove the
  following.
  \begin{derivation}
    \step{ref_1.\Delta_0.x_0.y_0 \land ref_2.\Delta_0.y_0.z \land h.\Delta.z}

    \trans{\imp}{assumption $\dataref{Y}{Z}{ref_2}{g}{h}$}

    \step{ref_1.\Delta_0.x_0.y_0 % \land ref_2.\Delta_0.y_0.z \land
      % h.\Delta.z
      \land \exists y : Stream_{ Y} \st \bmatches{\Delta_0}{y_0}{y} \land
      ref_2.\Delta.y.z \land g.\Delta.y} 

    \trans{=}{logic assuming freeness of $y$}

    \step{\exists y : Stream_{ Y} \st  % \land ref_2.\Delta_0.y_0.z \land
      % h.\Delta.z
        \bmatches{\Delta_0}{y_0}{y} \land
      ref_2.\Delta.y.z \land ref_1.\Delta_0.x_0.y_0 \land g.\Delta.y} 
    
    \trans{=}{$Stream_{ Y}$ contains all possible streams}

    \step{\exists y_1 : Stream_{ Y} \st  % \land ref_2.\Delta_0.y_0.z \land
      % h.\Delta.z
        \bmatches{\Delta_0}{y_0}{y_1} \land \bmatches{\Delta}{y}{y_1} \land
      ref_2.\Delta.y.z \land ref_1.\Delta_0.x_0.y_0 \land g.\Delta.y} 

    \trans{=}{use $\matches{\Delta_0}{y_0}{y_1}$ and $\matches{\Delta}{y}{y_1}$}

    \step{\exists y_1 : Stream_{ Y} \st  % \land ref_2.\Delta_0.y_0.z \land
      % h.\Delta.z
        % bmatches.\Delta_0.y_0.y_1 \land bmatches.\Delta.y.y_1 \land
      ref_2.\Delta.y_1.z \land ref_1.\Delta_0.x_0.y_1 \land g.\Delta.y_1}

    \trans{\imp}{logic, assumption $\dataref{X}{Y}{ref_1}{f}{g}$}

    \step{\exists y_1 : Stream_{ Y}, x : Stream_{ X} \st
        % bmatches.\Delta_0.y_0.y_1  \land
        \bmatches{\Delta_0}{x_0}{x} \land
      ref_2.\Delta.y_1.z \land ref_1.\Delta.x.y_1 \land f.\Delta.x} 

    \trans{\imp}{definition of $\circ$}

    \step{\exists x : Stream_{ X} \st \bmatches{\Delta_0}{x_0}{x} \land
      (ref_1 \circ ref_2).\Delta.x.z \land f.\Delta.x \hfill\qedhere}

  \end{derivation}

\end{proof}

\noindent
{\bf Lemma \refeq{thm:seq-comp}}
  Suppose $Y, Z \subseteq Var$ such that $Y \cap Z = \emptyset$,
  $g,g_1, g_2 \in IntvPred_{Z}$ and $ref \in IntvRel_{Y, Z}$. Then:
  \begin{align}
    \TL{Sequential composition}
    g_1 \Vdash_{Y, Z} ref  \land g_2 \Vdash_{Y,Z} ref & \ \ \imp\ \  (g_1
    \ch g_2) \Vdash_{Y, Z} ref &     \text{provided $ref$ joins}
    \\
    \TL{Iteration}
    g \Vdash_{Y, Z} ref & \ \ \imp\ \   g^\omega \Vdash_{Y, Z} ref &
    \text{provided $ref$ joins}
    \\
    (g_2 \Vdash_{Y,Z} ref) \land (g_1 \entails g_2) &\ \ \imp\ \ g_1 \Vdash_{Y, Z} ref 
    \TL{Weaken}
    \\
    (g \Vdash_{Y,Z} ref_1)  \lor (g \Vdash_{Y,Z} ref_2)  &\ \ \imp\ \ g
    \Vdash_{Y, Z} (ref_1 \lor ref_2)  
    \TL{Disjunction}
  \end{align}
\begin{proof}[Proof of \refeq{Sequential composition}]
  For an arbitrarily chosen $\Delta_0, \Delta \in Intv$ such that
  $\Delta_0 \adjoins \Delta$, $y_0 \in State_Y$ and $z \in Stream_{ Z}$,
  we have the following calculation.
  \begin{derivation}
    \step{
      ref.\Delta_0.y_0.z \land (g_1 \ch
      g_2).\Delta.z
    }

    \trans{=}{definition of `;', logic}

    \step{\exists \Delta_1, \Delta_2: Intv \st (\Delta_1 \cup
      \Delta_2 = \Delta) \land (\Delta_1 \adjoins \Delta_2) \land
      ref.\Delta_0.y_0.z \land g_1.\Delta_1.z \land
      g_2.\Delta_2.z }

    \trans{\imp}{$\Delta_0 \adjoins \Delta$ and $\Delta_1 \in
      prefix.\Delta$, therefore $\Delta_0 \adjoins \Delta_1$}

    \trans{}{assumption $g_1 \Vdash_{Y,Z} ref$} 

    \step{ \exists \Delta_1, \Delta_2: Intv \st
      \begin{array}[t]{@{}l@{}}
        (\Delta_1 \cup
        \Delta_2 = \Delta) \land (\Delta_1 \adjoins \Delta_2) \land 
        (\exists y_1 : Stream_{Y}\st \bmatches{\Delta_0}{y_0}{y_1} \land
        ref.\Delta_1.y_1.z) \land
        g_2.\Delta_2.z
      \end{array}
    }

    \trans{=}{logic}

    \step{ \exists \Delta_1, \Delta_2: Intv, y_1 : Stream_{Y} \st
      \begin{array}[t]{@{}l@{}}
        (\Delta_1 \cup \Delta_2 = \Delta) \land (\Delta_1 \adjoins
        \Delta_2) \land 
        \bmatches{\Delta_0}{y_0}{y_1}
        \land ref.\Delta_1.y_1.z \land g_2.\Delta_2.z
      \end{array}
    }

    \trans{=}{$\Delta_1 \adjoins \Delta_2$ and assumption $g_2
      \Vdash_{Y,Z} ref$}

    \step{\exists \Delta_1, \Delta_2: Intv, y_1, y_2 : Stream_{Y} \st
      \begin{array}[t]{@{}l@{}}
        (\Delta_1 \cup \Delta_2 = \Delta) \land (\Delta_1 \adjoins
        \Delta_2) \land \\
        \bmatches{\Delta_0}{y_0}{y_1}
        \land ref.\Delta_1.y_1.z \land  
        \bmatches{\Delta_1}{y_1}{y_2} \land ref.\Delta_2.y_2.z
      \end{array}
    }

    \trans{\imp}{pick $y_3$ such that $\matches{\Delta_0 \cup
      \Delta_1}{y_1}{y_3}$ and $\matches{\Delta_2}{y_2}{y_3}$}

    \step{\exists \Delta_1, \Delta_2: Intv, y_3 : Stream_{Y} \st
      \begin{array}[t]{@{}l@{}}
        (\Delta_1 \cup \Delta_2 = \Delta) \land (\Delta_1 \adjoins
        \Delta_2) \land 
        \bmatches{\Delta_0}{y_0}{y_3}
        \land ref.\Delta_1.y_3.z \land ref.\Delta_2.y_3.z
      \end{array}
    }

    \trans{=}{definition}

    \step{\exists y_3 : Stream_{Y} \st \bmatches{\Delta_0}{y_0}{y_3} \land (ref \ch
      ref).\Delta.y_3.z}

    \trans{\imp}{$ref$ joins}

    \step{\exists y_3 : Stream_{Y} \st \bmatches{\Delta_0}{y_0}{y_3} \land ref.\Delta.y_3.z \hfill\qedhere}

  \end{derivation}
\end{proof}
\begin{proof}[Proof of \refeq{Iteration}]
  This followings by unfolding ${}^\omega$ and has a similar structure
  to \refeq{Sequential composition}. \hfill \qedhere
\end{proof}
\begin{proof}[Proof of \refeq{Weaken}]
  For an arbitrarily chosen $\Delta_0, \Delta \in Intv$ such that
  $\Delta_0 \adjoins \Delta$, $y_0 \in State_Y$ and $z \in Stream_{Z}$,
  we have the following calculation.

  \begin{derivation*}
    \step{ref.\Delta_0.y_0.z \land g_1.\Delta.z }

    \trans{\imp}{assumption $g_1 \entails g_2$}

    \step{ref.\Delta_0.y_0.z \land g_2.\Delta.z }

    \trans{\imp}{assumption $g_2 \Vdash_{Y,Z} ref$}

    \step{\exists y : Stream_{Y} \st \bmatches{\Delta_0}{y_0}{y} \land
      ref.\Delta.y.z \hfill\qedhere}
        
  \end{derivation*}  
\end{proof}
\begin{proof}[Proof of \refeq{Disjunction}]

  \begin{derivation}
    \step{}
    \step{(ref_1 \lor ref_2).\Delta_0.y_0.z \land g.\Delta.z} 

    \trans{=}{logic}

    \step{(ref_1.\Delta_0.y_0.z \land g.\Delta.z) \lor
      (ref_2.\Delta_0.y_0.z \land g.\Delta.z)} 

    \trans{\imp}{assumption $(g \Vdash_{Y,Z} ref_1)  \lor (g
      \Vdash_{Y,Z} ref_2)$, logic}

    \step{\exists y_1, y_2 : Stream_{Y} \st (\bmatches{\Delta_0}{y_0}{y_1} \land
      ref_1.\Delta.y_1.z) \lor
      (\bmatches{\Delta_0}{y_0}{y_2} \land
      ref_2.\Delta.y_2.z)
    }

    \trans{\imp}{logic}

    \step{\exists y : Stream_{Y} \st \bmatches{\Delta_0}{y_0}{y} \land
      (ref_1 \lor ref_2).\Delta.y.z
    \hfill\qedhere}
  \end{derivation}
  
\end{proof}

For streams $s_1$ and $s_2$, we define $ s_1 \Cup s_2 \sdef \lambda t
: \Phi \st s_1.t \cup s_2.t$. If the state spaces corresponding to
$s_1$ and $s_2$ are disjoint, then for each $t \in \Phi$, $(s_1 \Cup
s_2).t$ is a state and hence $s_1 \Cup s_2$ is a stream.\smallskip

\noindent {\bf Lemma \refeq{thm:disj}}\textrm{(Disjointness)}
Suppose $p \in Proc$, $W, X, Y, Z \subseteq Var$ such that $Y \cap Z =
\emptyset$, $W \cup X = Y$ and $W \cap X = \emptyset$. Further suppose
that $g_1, g_2 \in IntvPred_{Z}$, $ref_W \in IntvRel_{W, Z}$, $ref_X
\in IntvRel_{X, Z}$, and $\star \in \{\land, \lor\}$. Then
  \begin{align}
    (g_1 \Vdash_{W,Z} ref_W) \land (g_2 \Vdash_{X,Z} ref_X) &\ \ \imp\
    \ (g_1 \land g_2) \Vdash_{Y, Z} (ref_W \star ref_X)
    \TL{Disjointness}
  \end{align}
\begin{proof}
  Because $W \cup X = Y$ and $W \cap X = Y$, for any $y_0 \in
  Stream_{Y}$, we have that $y_0 = w_0 \Cup x_0$ for some $w_0 \in
  Stream_W$, $x_0 \in Stream_{X}$. Then for any $z \in Stream_{Z}$,
  $\Delta_0, \Delta \in Intv$ such that $\Delta_0 \adjoins
  \Delta$, we have the following calculation:
  \begin{derivation}
    \step{(ref_W \star ref_X).\Delta_0.y_0.z \land
      (g_1 \land g_2).\Delta.z}

    \trans{\imp}{assumption $y_0 = w_0 \Cup x_0$}

    \step{(ref_W.\Delta_0.w_0.z \star ref_X.\Delta_0.x_0.z) \land
      (g_1 \land g_2).\Delta.z}

    \trans{\imp}{$\land$ distributes over $\star$, logic}

    \step{(ref_W.\Delta_0.w_0.z  \land g_1.\Delta.z) \star 
      (ref_X.\Delta_0.x_0.z \land g_2.\Delta.z)}
    
    \trans{\imp}{assumption $(g_1 \Vdash_{W,Z} ref_W) \land (g_2
      \Vdash_{X,Z} ref_X)$}

    \step{(\exists w : Stream_W\st \bmatches{\Delta_0}{w_0}{w} \land
      ref_W.\Delta.w.z) \star (\exists x : Stream_{X}\st
      \bmatches{\Delta_0}{x_0}{x} \land ref_X.\Delta.x.z)} 

    \trans{=}{logic, assumption $W \cap X = \emptyset$}

    \step{\exists w : Stream_W, x : Stream_{X} \st
      \bmatches{\Delta_0}{w_0 \Cup x_0}{w \Cup x} \land
      (ref_W.\Delta.w.z  \star ref_X.\Delta.x.z)} 

    \trans{=}{logic, assumption $y_0 = w_0 \Cup x_0$}

    \step{\exists w : Stream_W, x : Stream_{X} \st
      \bmatches{\Delta_0}{y_0}{w \Cup x} \land
      (ref_W \star ref_X).\Delta.(w \Cup x).z}

    \trans{=}{$W \cup X = Y$ and $W \cap X = \emptyset$}

    \step{\exists y : Stream_{Y} \st
      \bmatches{\Delta_0}{y_0}{y} \land
      (ref_W \star ref_X).\Delta.y.z \hfill \qedhere}

  \end{derivation}
\end{proof}

% \newpage

% \subsection*{Outline}
% \begin{enumerate}
% \item Intro $\tick$
% \item Explain forward simulation with respect to an example $\tick$
% \item Smooth transition to interval predicates $\tick$
% \item Interval theory and example semantics 
%   \begin{itemize}
%   \item Bring in example earlier $\tick$
%   \item Explain formal concepts with respect to example in more detail $\tick$
%   \item Split example into interval predicates and expression
%     evaluation  $\tick$
%   \item More motivation for apparent states evaluators $\tick$
%   \end{itemize}
% \item 
% \end{enumerate}

\end{document}